\title{A General Technique for Searching in Implicit Sets via Function Inversion}
\author{Boris Aronov}{Tandon School of Engineering, New York University, Brooklyn, NY, USA.}
{boris.aronov@nyu.edu}
{https://orcid.org/0000-0003-3110-4702}
{Work partially supported by NSF Grant CCF-20-08551.}
\author{Jean Cardinal}{Universit\'e libre de Bruxelles (ULB), Brussels, Belgium}{jean.cardinal@ulb.ac.be}{https://orcid.org/0000-0002-2312-0967}{}
\author{Justin Dallant}{Technische Universität Dresden, Dresden, Germany}{justin.dallant@tu-dresden.de}{https://orcid.org/0000-0001-5539-9037}{This work was done while the author was at the Universit\'e libre de Bruxelles, supported by the French Community of Belgium via the funding of a FRIA grant.}
\author{John Iacono}{Universit\'e libre de Bruxelles (ULB), Brussels, Belgium}{john.iacono@ulb.ac.be}{https://orcid.org/0000-0001-8885-8172}{This work was supported by the Fonds de la Recherche Scientifique-FNRS under Grant n°~40008322.}
\authorrunning{B.\ Aronov, J.\ Cardinal,  J.\ Dallant, and J.\ Iacono} 
\keywords{Function inversion, data structures, string indexing, computational geometry} 
\DeclarePairedDelimiter\floor{\lfloor}{\rfloor}
\newtheorem{problem}{Problem}
\begin{document}

\maketitle

\begin{abstract} 
    In recent years, the Fiat-Naor function inversion scheme has been used to disprove conjectures in fine-grained complexity theory and design state of the art data structures for a number of combinatorial problems. We pursue this line of research by considering its application to data structures for searching in implicit sets, defined as the image of a function. Given a function $f$ from the set $[N]$ to a $d$-dimensional integer grid, we consider data structures that allow efficient orthogonal range searching queries in the image of $f$, without explicitly storing it.

    We show that, if $f$ is of the form $[N]\to [2^{w}]^d$ for some $w=\mathrm{polylog} (N)$ and is computable in constant time, then, for any $0<\alpha <1$, we can obtain a data structure using $\tilde {O}(N^{1-\alpha / 3})$ space such that, for a given $d$-dimensional axis-aligned box $B$, we can search for some $x\in [N]$ such that $f(x) \in B$ in time $\tilde{O}(N^{\alpha})$. (Here the $\tilde{O}(.)$ notation omits polylogarithmic factors.)

    Using similar techniques, we further obtain
    \begin{itemize}
    \item data structures for range counting and reporting, predecessor, selection, ranking queries, and combinations thereof, on the set $f([N])$,
    \item data structures for preimage size and preimage selection queries for a given value of $f$, and
    \item data structures for selection and ranking queries on geometric quantities computed from tuples of points in $d$-space.
    \end{itemize}
  
    These results unify and generalize previously known results on 3SUM-indexing and string searching, and are widely applicable as a black box to a variety of problems.   
    In particular, we give a data structure for a generalized version of gapped string indexing, and show how to preprocess a set of points on an integer grid in order to efficiently compute (in sublinear time), for points contained in a given axis-aligned box, their Theil-Sen estimator, the $k$th largest area triangle, or the induced hyperplane that is the $k$th furthest from the origin.
\end{abstract}

\maketitle
\newpage

\section{Introduction}

The computational problem of \emph{inverting} a function $f$ can be cast as follows: Given a value $y$ in the codomain of $f$, decide if there exists an $x$ such that $f(x)=y$, and return such an $x$ if one exists. The function inversion problem is naturally a fundamental issue in cryptography, and the best known result for general function inversion is due to Fiat and Naor~\cite{Fiat1999}:

\begin{theorem}[Fiat-Naor \cite{Fiat1999}]
\label{thm:inversion1}
    For any function $f \colon [N]\to [N]$, and any choice of a constant $0<\alpha <1$, there exists a data structure using $\Tilde{O}(N^{1-\alpha/3})$ space which can invert $f$ at any given point in $\Tilde{O}(N^{\alpha})$ time. Moreover, this data structure can be built in $\Tilde{O}(N)$ time with high probability.
\end{theorem}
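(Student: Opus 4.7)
The plan is to follow the classical Hellman time--memory tradeoff framework and its extension by Fiat and Naor to arbitrary functions: instead of storing $f$ explicitly, we precompute many short chains of iterated applications of $f$, retain only their endpoints, and reconstruct the interior of a chain on demand by iterating $f$ from the stored start.

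First I would recall the core Hellman construction. Choose a length parameter $t$ and sample $m$ random starting points $x_1,\dots,x_m \in [N]$. For each $i$, compute the chain $x_i, f(x_i), f^2(x_i),\dots,f^t(x_i) =: y_i$ and store only the pair $(x_i, y_i)$, indexed by the endpoint $y_i$ in a hash table. To invert a query $y$, set $z_0 = y$ and iterate $z_{j+1} = f(z_j)$; at each step check whether $z_j$ equals some stored endpoint $y_i$, and if so walk the corresponding chain from $x_i$ to find a preimage of $y$. A single chain of length $t$ covers at most $t$ domain elements, so with $m$ chains we cover at most $mt$, and once $mt^2$ is much larger than $N$ the chains start merging heavily and coverage stagnates. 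The standard remedy is to use several independently randomized tables, each tuned so that its own chains cover a constant fraction of $[N]$ while internal collisions remain rare.

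The main obstacle, and the main contribution of Fiat and Naor beyond Hellman, is that for a worst-case (non-random) $f$ the dynamics of iterating $f$ may be highly skewed, with many chains collapsing onto a small set of sinks, so that naive random starts do not achieve the claimed coverage. The fix is to introduce a family of independent random ``coloring'' functions $\sigma_1,\dots,\sigma_r \colon [N]\to [N]$ and to build Hellman-style tables for the composed functions $\sigma_i\circ f$; since each $\sigma_i$ is random, the composed maps behave sufficiently like random functions for the coverage analysis to go through. Inverting $y$ then reduces, in each table, to iterating $\sigma_i\circ f$ starting at $\sigma_i(y)$ and looking up endpoints as before. An additional ingredient is to handle ``heavy'' values of $f$ (those with many preimages) separately by storing an arbitrary preimage for each of them explicitly, so the chain-based analysis only has to deal with values of bounded preimage size.

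Finally, balancing the parameters produces the exponent $1-\alpha/3$. The coverage and collision constraints for general $f$ yield a space--time tradeoff of the form $S = \Tilde{O}(N / T^{1/3})$; setting $T = \Tilde{O}(N^\alpha)$ then gives $S = \Tilde{O}(N^{1-\alpha/3})$, matching the theorem. The $\Tilde{O}(N)$ randomized preprocessing time follows from the fact that the total length of all chains in all tables is $\Tilde{O}(N)$, each chain step costs a constant-time evaluation of $f$ plus a hash-table insertion, and identifying the heavy values requires only $\Tilde{O}(N)$ additional work via a single pass that counts image frequencies. The most delicate step will be the probabilistic coverage analysis under arbitrary $f$: one must show that with high probability the chains across the $r$ randomized tables collectively cover all but a polynomially small fraction of $[N]$, while keeping the total space within the claimed bound.
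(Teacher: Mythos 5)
The paper does not prove this statement: Theorem~\ref{thm:inversion1} is quoted directly from Fiat and Naor~\cite{Fiat1999}, so there is no internal proof to compare against.

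On its own terms, your outline is a faithful high-level account of the Fiat--Naor scheme: Hellman chains with only endpoints stored, rerandomization by composing $f$ with auxiliary functions so a worst-case $f$ behaves like a random one under iteration, explicit storage of preimages for heavy (high-indegree) values, and parameter balancing leading to $S^3 T = \tilde{O}(N^3)$, i.e., $S = \tilde{O}(N^{1-\alpha/3})$ at $T = \tilde{O}(N^\alpha)$. Two gaps separate this sketch from a proof. First, you cannot literally use independent uniformly random functions $\sigma_i \colon [N]\to[N]$, since storing even one such function already costs $\Omega(N)$ words and defeats the space budget; Fiat and Naor instead draw the rerandomizers from a limited-independence (roughly polylog-wise independent) family storable and evaluable in $\tilde{O}(1)$, and one must argue that the coverage and collision bounds survive under this reduced independence. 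Second, you explicitly defer the coverage and collision analysis, which is the technical heart of the theorem: after pulling out the heavy elements, one has to bound chain merges and false alarms for the composed maps $\sigma_i\circ f$ so that (i) the stored endpoints across all tables stay within $\tilde{O}(S)$, (ii) each query performs only $\tilde{O}(T)$ iterations and endpoint lookups, and (iii) the union of chains over all tables covers the remaining image of $f$ with high probability, and these three constraints together are exactly what force the exponent $1/3$ rather than Hellman's heuristic $1/2$. As written, that tradeoff is asserted rather than derived, so the proposal is a correct plan but not yet an argument.
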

Here we use $[N]$ to denote $\{0,\dots,N-1\}$, and use the asymptotic $\Tilde{O}$ notation, that omits polylogarithmic factors.
As was noticed previously by different sets of authors \cite{Corrigan-Gibbs2019, Golovnev2020,Kopelowitz2019}, Theorem~\ref{thm:inversion1} can be generalized to the case where the codomain of $f$ might be larger than the domain of $f$, by using $O(\log N)$ hash functions from a family of pairwise independent functions.

\begin{corollary}[\cite{Corrigan-Gibbs2019, Golovnev2020,Kopelowitz2019}]\label{cor:inversion}
    Theorem~\ref{thm:inversion1} holds for any function $f \colon [N]\to [2^{w}]$, with $w=\mathrm{polylog}(N)$.
\end{corollary}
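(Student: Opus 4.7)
The plan is to reduce inversion of $f : [N] \to [2^w]$ to a small number of inversions of functions of the form $[N] \to [N]$, to which Theorem~\ref{thm:inversion1} applies directly. The reduction uses a pairwise independent hash family.

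First, I would draw $k = \Theta(\log N)$ independent hash functions $h_1, \dots, h_k$ from a pairwise independent family $\mathcal{H} = \{ h : [2^w] \to [N] \}$; since $w = \mathrm{polylog}(N)$, each $h_i$ has a $\mathrm{polylog}(N)$-size description and is evaluable in constant time on the word RAM. For each $i$, form the composition $g_i = h_i \circ f : [N] \to [N]$, which is still evaluable in constant time, and store the Fiat--Naor inverter for $g_i$ given by Theorem~\ref{thm:inversion1}. The total space used across all $k$ structures is $O(\log N) \cdot \Tilde{O}(N^{1-\alpha/3}) = \Tilde{O}(N^{1-\alpha/3})$, and the total preprocessing time is $\Tilde{O}(N)$ with high probability.

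To query at $y \in [2^w]$, for each $i$ compute $h_i(y)$, query the $i$-th Fiat--Naor inverter at $h_i(y)$ to obtain a candidate $x_i$ with $g_i(x_i) = h_i(y)$, and verify whether $f(x_i) = y$. Return any verified $x_i$; otherwise report ``no preimage.'' The query time is $O(\log N) \cdot \Tilde{O}(N^\alpha) = \Tilde{O}(N^\alpha)$.

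The only nontrivial point is correctness when $y \in f([N])$: the Fiat--Naor inverter for $g_i$ returns \emph{some} preimage of $h_i(y)$ under $g_i$, which need not be a preimage of $y$ under $f$. Writing $A = f^{-1}(y)$ and $F_i = \{ x \in [N] \setminus A : h_i(f(x)) = h_i(y) \}$, pairwise independence gives $E[|F_i|] \leq (N - |A|)/N \leq 1$, so each trial has a constant probability of returning an element of $A$ rather than of $F_i$, and amplifying with $k = \Theta(\log N)$ independent trials should yield success with high probability. I expect the main obstacle is precisely turning this expectation bound into a high-probability guarantee that at least one $x_i$ is a genuine preimage: one natural route is to strengthen the Fiat--Naor primitive so that it enumerates \emph{all} preimages of $h_i(y)$ under $g_i$ in output-sensitive time, so that whenever $F_i$ is empty---which occurs with constant probability per trial and hence for at least one of the $k$ trials with high probability---the enumeration is certain to contain a true preimage of $y$.
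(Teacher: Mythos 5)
Your high-level plan is the same as the one the paper alludes to for this corollary (compose $f$ with $O(\log N)$ pairwise-independent hashes and amplify), so the approach is right in spirit. However, there is a genuine gap at exactly the point you flag, and your two attempted patches do not close it.

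The issue is that the Fiat--Naor inverter for $g_i = h_i \circ f$ returns an \emph{arbitrary} element of $g_i^{-1}(h_i(y))$, not a uniformly random one, so the bound $E[|F_i|] \leq 1$ does not by itself give a ``constant probability of returning an element of $A$.'' The only way to force a correct answer is to have $F_i = \emptyset$, and with your choice of hash range $[N]$ this is \emph{not} a constant-probability event: Markov gives $\Pr[F_i \neq \emptyset] \leq E[|F_i|] = (N-|A|)/N$, which is vacuous when $|A| = O(1)$, and pairwise independence by itself provides no useful lower bound on $\Pr[F_i = \emptyset]$. Your fallback (enumerate all of $g_i^{-1}(h_i(y))$) rests on the same unproven ``$F_i = \emptyset$ with constant probability'' claim, and additionally requires an output-sensitive enumeration variant of Fiat--Naor that Theorem~\ref{thm:inversion1} does not provide.

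The fix is small: take the hash codomain to be $[cN]$ for a constant $c > 1$ (say $c = 2$) instead of $[N]$. Then $E[|F_i|] \leq (N-|A|)/(cN) \leq 1/c$, so by Markov $\Pr[F_i = \emptyset] \geq 1 - 1/c$, a genuine constant. When $F_i = \emptyset$, every element of $g_i^{-1}(h_i(y))$ lies in $A$, so the inverter must return a true preimage and no enumeration is needed. Amplifying over $k = \Theta(\log N)$ independent hashes and taking a union bound over the at most $N$ values $y \in f([N])$ (for $y \notin f([N])$ the verification step already handles correctness) gives success with high probability. The mismatch between domain $[N]$ and codomain $[cN]$ is a technicality: either pad the domain to $[cN]$ by mapping the extra points to themselves (and discard such preimages), or note that the Fiat--Naor tradeoff is governed by the domain size and tolerates a codomain that is $O(N)$ up to polylogarithmic overhead. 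The rest of your accounting (hash descriptions of size $\mathrm{polylog}(N)$, total space $\Tilde{O}(N^{1-\alpha/3})$, query time $\Tilde{O}(N^{\alpha})$, preprocessing $\Tilde{O}(N)$) is correct.
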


Recently, the above data structures for function inversion have been applied to tackle fundamental problems that do not directly stem from cryptography: 

\begin{description}
\item[3SUM-indexing]\cite{Kopelowitz2019,Golovnev2020}. Preprocess two sets $S_1$ and $S_2$ of $n$ integers each, so that for any integer $y$, one can decide whether there exist $x_1\in S_1$ and $x_2\in S_2$ such that $y = x_1 + x_2$. 
\item[Collinearity indexing with queries on a line]\cite{AronovESZ23}. Preprocess two sets $S_1$ and $S_2$ of $n$ points in the plane, together with a line $\ell$, so that for any point $y\in\ell$, one can decide whether there exist $p_1\in S_1$ and $p_2\in S_2$ such that $y$, $p_1$, and $p_2$ are collinear.
\item[Gapped string searching]\cite{DBLP:conf/stacs/BilleGLPRS24}. Preprocess a string of length $n$ so that given two patterns $P_1$ and $P_2$ and an integer interval $[\delta, \beta]$, one can report all joint occurrences of $P_1$ and $P_2$ separated by a distance in the interval $[\delta, \beta]$. 
\end{description}

It is a striking fact that, for each of these problems, the best known space-time trade-offs are achieved using such a general tool as the Fiat-Naor inversion scheme. The purpose of this contribution is to extend the applicability of this tool. We show that function inversion, used as a black box, allows to not only decide whether there exists some $x$ such that $f(x)=y$, but to actually search, select, and rank values in the image of $f$, as well as to restrict those values to lie in axis-aligned boxes\footnote{Throughout the paper, we use the term ``box'' as short for ``axis-aligned box'': a hyperrectangle with sides parallel to the coordinate axes.} given as query arguments. 
The space-time tradeoffs are essentially the same as those achieved for function inversion: $k$-ary queries on $n$ items can be answered in strongly sublinear time from a data structure using $O(n^{k-\alpha})$ space, for some $\alpha >0$.
Our contribution can therefore be seen as an efficient reduction of various problems to the function inversion problem.
This provides a simple and powerful method to design data structures for \textit{implicit set representations}, as defined by Chazelle~\cite{Chazelle1987}.

The main idea underlying all our results is that the elementary operations used in range searching, essentially navigating a hierarchy of dyadic boxes, can themselves be implemented by inverting another function, with a slightly larger domain and codomain but within essentially the same space and time bounds. 
The algorithms are elementary, provided that the function inversion data structure is available as a black box. It can be implemented using any such data structure, including some that would be suboptimal but easier to implement.
Conversely, any improvement on the function inversion data structure translates mechanically to an improved algorithm.

\subsection{Previous work}

One of the earliest applications of function inversion in a pure algorithmic context is to the 3SUM-indexing problem defined above, a data structure version of the classical 3SUM problem~\cite{GO12,GP18,KLM19}. Achievable space-time tradeoffs for this question were the topic of several recent results. It was conjectured by Goldstein, Kopelowitz, Lewenstein, and Porat~\cite{GoldsteinKLP17} that any data structure for 3SUM-indexing with $O(n^{\alpha})$ query time for some $\alpha < 1$ must use $\Omega (n^2)$ space. 
This conjecture was disproved by Golovnev, Guo, Horel, Park, and Vaikuntanathan~\cite{Golovnev2020}, and Kopelowitz and Porat~\cite{Kopelowitz2019} using Corollary~\ref{cor:inversion}.
In fact, the 3SUM-indexing problem is a prominent example of a problem that is currently best solved by using this machinery. This is surprising, considering the fact that this solution does not rely on any algebraic property of the problem.

Aronov, Ezra, Sharir, and Zigdon~\cite{AronovESZ23} considered the collinearity indexing problem, which is to the collinearity detection problem what 3SUM-indexing is to the 3SUM problem. There, two sets of $n$ points in the plane are given, and we wish to construct a data structure that answers queries of the following form: Given a  point $q$, are there two points, one in each set, that are collinear with $q$?
Using the Fiat-Naor scheme, they showed that subquadratic space and sublinear query time can be achieved for points with integer coordinates, provided that all query points lie on a line, or more generally on a constant-degree algebraic curve.

By combining function inversion with suffix arrays, Bille, G{\o}rtz, Lewenstein, Pissis, Rotenberg, and Steiner~\cite{DBLP:conf/stacs/BilleGLPRS24} gave improved tradeoffs for gapped string searching data structures. There the problems involve searching multiple patterns in a preprocessed string with constraints on their joint location. 

As mentioned above, we generalize and simplify all these results.

Recently, function inversion was also used by McCauley~\cite{MC24} to reduce the space requirements of a number of nearest-neighbor data structures, in particular those using locality-sensitive hashing.

The question of whether the Fiat-Naor tradeoff in Theorem~\ref{thm:inversion1} can be improved is a tantalizing one. It is not known, in particular, whether a space requirement of only $\tilde{O}(N^{1-\alpha})$ can be achieved, which would be best possible~\cite{GT00,DTT10}. In a recent paper, Golovnev, Guo, Peters, and Stephens-Davidowitz~\cite{GGPS23} showed how to improve the data structure size to $\tilde{O}(N^{\frac 32-\alpha})$, which is beneficial in the regime $\alpha > 3/4$. While such an improvement could in principle be applied to our results, it is not completely clear from their paper that the preprocessing and query time remain as given in Theorem~\ref{thm:inversion1}, since they only focus on space requirements and on the number of queries to the inverted function. Their method also requires shared randomness between the preprocessing and the online phases. While in~practice this can be replaced by a hash function, the result holds in full rigor only in the non-uniform setting, requiring the additional storage of $\tilde{O}(N)$ additional strings that do not depend on the function $f$ being inverted (see the discussion in \cite{GGPS23}, Section 1.1). Golovnev et al. also prove that the function inversion problem as formulated here can be reduced, with only polylogarithmic overhead, to its decision-only variant, in which given a value $y$, we only need to decide whether there exists an $x$ such that $f(x)=y$. The technique used there is reminiscent of the ones we use in our results.

Connections between lower bounds for the function inversion problem, circuit lower bounds, and the network coding conjecture are explored in Corrigan-Gibbs and Kogan~\cite{Corrigan-Gibbs2019}, and in Dvo\v{r}\'ak, Kouck\'y, Kr\'al, and Sl\'ivov\'a~\cite{DKKS21}.

Finally, the relation between the existence of improved function inversion data structures and problems in metacomplexity such as the minimum circuit size problem and the time-Bounded Kolmogorov complexity is investigated in  recent papers from Mazor and Pass~\cite{MP24}, and Hirahara, Ilango, and Williams~\cite{HIW24}. By implementing the Fiat-Naor inversion scheme using Boolean circuits, they managed to improve over the brute-force approach to these problems, something which was once conjectured to be impossible.

\subsection{Plan of the paper}

In the next section, we consider range searching and range counting queries in the image of an integer function. In Section~\ref{sec:beyond}, we use these two basic results to provide a collection of data structures that allow to answer predecessor, ranking, and selection queries in boxes, both in the image and domain of an integer function. In Section~\ref{sec:app}, we illustrate the versatility of the framework on a number of elementary problems. These include generalizations of $k$-SUM indexing data structures and string searching, but also new geometric data structures.

A preliminary version of this paper was presented at the 2024 Symposium on Simplicity in Algorithms (SOSA) \cite{AronovCDI24}. We thank the SOSA referees for insightful comments. This revised version includes in particular an updated summary of recent contributions on data structures involving function inversion as well as an extended section on geometric applications.

\section{Range queries}
\label{sec:range}

We state our two main technical results, involving data structures for range searching and range counting queries in the image of a $d$-dimensional function $f$. 

Our algorithms run in the standard word RAM model of computation with $O(\log N)$ word length. We use the notation $\mathrm{polylog} (N)$ to denote $O(\log^c N)$ for some constant $c$. Note that  arithmetic operations on numbers in the range $[2^w]$ with $w=\mathrm{polylog} (N)$ can be carried out in time $\tilde{O}(1)$ on the word RAM.

\subsection{Range searching}

\begin{theorem}[Range searching]
    \label{thm:range}
    Let $f \colon [N]\to [2^{w}]^d$ be a function computable in $\Tilde{O}(1)$ time, where $w=\mathrm{polylog}(N)$ and $d=O(1)$. For any choice of a constant $0<\alpha <1$, we can construct in $\Tilde{O}(N)$ time with high probability a data structure using $\Tilde{O}(N^{1-\alpha/3})$ words of space which supports the following query:
    Given a $d$-dimensional box $B$, test if there exists $x\in [N]$ with $f(x) \in B$, and report such a value if it exists, in $\Tilde{O}(N^\alpha)$ time.
\end{theorem}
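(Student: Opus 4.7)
The plan is to reduce range searching on $f$ to function inversion via the standard canonical decomposition of an axis-aligned box into dyadic boxes, applying Corollary~\ref{cor:inversion} to an auxiliary function whose preimages encode membership of $f(x)$ in canonical dyadic boxes.

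The first step is to define the auxiliary function. A \emph{canonical dyadic box} in $[2^{w}]^d$ is a product of $d$ dyadic intervals of the form $[j\cdot 2^{i},(j+1)\cdot 2^{i})$, with $0\le i\le w$. Each point of $[2^{w}]^d$ belongs to exactly one canonical dyadic box of each of the $M:=(w+1)^d=\mathrm{polylog}(N)$ possible \emph{shapes} (one dyadic edge length per coordinate). Let $\mathcal{C}$ be the set of all canonical dyadic boxes, encoded as integers, and define $g\colon [N]\times[M]\to\mathcal{C}$ by letting $g(x,s)$ be the unique canonical dyadic box of shape $s$ containing $f(x)$. Since $f$ is computable in $\tilde{O}(1)$ time and $g(x,s)$ is obtained by truncating each coordinate of $f(x)$ to the appropriate bit-prefix, $g$ is also computable in $\tilde{O}(1)$ time. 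Identifying the domain with $[N']$ for $N':=NM=\tilde{O}(N)$ and noting that $|\mathcal{C}|\le 2^{wd+O(\log w)}=2^{\mathrm{polylog}(N)}$, Corollary~\ref{cor:inversion} applies to $g$ and yields, with high probability and in $\tilde{O}(N')=\tilde{O}(N)$ preprocessing time, a data structure of size $\tilde{O}(N'^{1-\alpha/3})=\tilde{O}(N^{1-\alpha/3})$ that inverts $g$ in $\tilde{O}(N'^{\alpha})=\tilde{O}(N^{\alpha})$ time per query.

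The query procedure is then straightforward. Given $B$, intersect it with $[2^w]^d$ and compute its standard $d$-dimensional canonical decomposition $B\cap[2^w]^d=C_1\sqcup\cdots\sqcup C_k$, where $k=O(w^d)=\mathrm{polylog}(N)$. For each $C_j$, invert $g$ at $C_j$; if some preimage $(x,s)$ is returned, output $x$ and halt. If every inversion fails, report that no witness exists. Correctness is immediate: if $f(x)\in B$, then $f(x)$ lies in exactly one $C_j$, and taking $s$ to be the shape of $C_j$ gives $g(x,s)=C_j$, so inversion at $C_j$ must succeed; conversely, any returned $x$ comes from a pair $(x,s)$ with $g(x,s)=C_j$, i.e., $f(x)\in C_j\subseteq B$. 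The total query time is $k\cdot\tilde{O}(N^{\alpha})=\tilde{O}(N^{\alpha})$.

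The only issue worth checking is that the auxiliary function $g$ still meets the hypotheses of Corollary~\ref{cor:inversion} after the slight inflation of both its domain (by a factor $M=\mathrm{polylog}(N)$) and its codomain (which now sits inside $[2^{w'}]$ for some $w'=\mathrm{polylog}(N)$). Since $d$ is constant and $w=\mathrm{polylog}(N)$, both inflations are absorbed by the $\tilde{O}$ notation, so the Fiat-Naor bounds transfer verbatim. Everything else---the dyadic decomposition of $B$ and the constant-time bit manipulations needed to evaluate $g$---is standard and contributes only polylogarithmic overhead, which is the main (and essentially only) technical point.
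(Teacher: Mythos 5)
Your proof is correct and follows essentially the same approach as the paper's: you define an auxiliary function equivalent (up to re-encoding) to the paper's $g$, mapping $(x,\text{shape})$ to the canonical dyadic box of that shape containing $f(x)$, invert it via Corollary~\ref{cor:inversion}, and decompose the query box into $\mathrm{polylog}(N)$ canonical dyadic pieces. The only cosmetic difference is that you rely directly on the inversion oracle returning a witness $x$, whereas the paper additionally describes a recursive descent to a unit dyadic box; since Fiat--Naor inversion already returns a preimage, your streamlined version is equally valid.
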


\begin{proof}
    Let $g \colon [N]\times [w+1]^d \to [2^{w+1}]^d \times [w+1]^d$ be the function defined as follows:
    \[g(x,i_1,i_2,\ldots,i_d) \coloneqq \left(\floor*{\frac{f_1(x)}{2^{i_1}}}, \floor*{\frac{f_2(x)}{2^{i_2}}}, \ldots, \floor*{\frac{f_d(x)}{2^{i_d}}},i_1,i_2,\ldots,i_d \right).\]
    The domain of this function has size $O(N(w+1)^d) = \Tilde{O}(N)$ and its codomain has size $O(2^{d(w+1)}(w+1)^d) = \Tilde{O}(2^{\mathrm{polylog}(N)})$, so we can apply Corollary~\ref{cor:inversion} to obtain a data structure using $\Tilde{O}(N^{1-\alpha/3})$ words of space which can invert $g$ at any given point in $\Tilde{O}(N^{\alpha})$ time.

We call a $d$-dimensional box \emph{dyadic} if the coordinates defining the limits of the box in each dimension are successive multiples of a power of two. We can specify such a box using the multiples and powers of two in each dimension: 
\begin{align*}
 DB(y,i) = {}&DB(y_1,y_2\ldots,y_d,i_1,i_2,\ldots i_d) \coloneqq \\ 
             &[2^{i_1}y_1,2^{i_1}(y_1+1)-1]\times [2^{i_2}y_2,2^{i_2}(y_2+1)-1] \times \\
  &\quad \dots \times [2^{i_d}y_d,2^{i_d}(y_d+1)-1].
\end{align*}

Our function $g$ has been defined so that 
\[ f(x)  \in  DB(y,i)
\longleftrightarrow
g(x,i)=(y,i).
\]
Thus, given any dyadic box $DB(y,i)$, we can determine if there is an $x$ such that $f(x) \in DB(y,i)$ by asking if $(y,i)$ is in the image of $g$. 

Given a  dyadic box $DB(y,i)$ that contains $f(x)$ for some $x$ that we wish to compute, each dyadic box with $d'\leq d$ non-zero elements of $i$ can be partitioned into $2^{d'}$ dyadic boxes where the non-zero elements of $i$ are decremented by one; call these the child boxes of $DB(y,i)$. These child boxes can equivalently be viewed as the result of cutting $DB(y,i)$ in half in each dimension where its size is at least 2. If some $f(x) \in DB(y,i)$, then $f(x) \in B'$ for some child box $B'$ of $DB(y,i)$; continue this process until some $f(x) \in DB(z,0)$ is found, which means $f(x)=z\in DB(y,i)$. This will involve $w+1$ steps, each consisting of at most $2^d$ inversion queries.

Let $B$ be a $d$-dimensional box, not necessarily dyadic. To determine if there is an $f(x) \in B$, we use the fact that any $d$-dimensional box defined by integers at most $2^{\mathrm{polylog}(N)}$ can be partitioned into $O(\log^d (2^{\mathrm{polylog}(N)})) = \Tilde{O}(1)$ dyadic boxes, and query each of these separately.
\end{proof}

Note that in one dimension the idea of searching in bounded universes via hashing is well known, from, for example, the $x$- and $y$-fast tries of Willard~\cite{DBLP:journals/ipl/Willard83}. The idea of extending this to higher dimensions with dyadic boxes has been used in the past, for example in the point location method of Iacono and Langerman~\cite{DBLP:conf/cccg/IaconoL00}.

\subsection{Range counting}

We now show how to extend our result from searching to counting using a big/small approach whereby counts of dyadic boxes intersecting large number of images of $f$ have their counts stored explicitly, and the remaining dyadic boxes have their counts computed on demand using function inversion. This increases the space usage from 
$\Tilde{O}(N^{1-\alpha/3})$ to
$\Tilde{O}(N^{1-\alpha/4})$ while supporting  $\Tilde{O}(N^\alpha)$-time queries.

\begin{theorem}[Range counting]
\label{thm:counting}
    Let $f \colon[N]\to [2^{w}]^d$ and $c \colon[2^{w}]^d\to [2^{w}]$ be functions computable in $\Tilde{O}(1)$ time, where $w=\mathrm{polylog}(N)$ and $d=O(1)$. For any choice of a constant $0<\alpha <1$, we can construct in $\Tilde{O}(N)$ time with high probability a data structure using $\Tilde{O}(N^{1-\alpha/4})$ words of space which supports the following query:
    Given a $d$-dimensional box $B$, return $\sum_{y\in f([N])\cap B} c(y)$, in $\Tilde{O}(N^\alpha)$ time.
\end{theorem}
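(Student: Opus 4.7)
The plan is to apply a big/small decomposition on the dyadic boxes introduced in the proof of Theorem~\ref{thm:range}. Fix a threshold $\tau\coloneqq N^{\alpha/4}$ and call a dyadic box $DB$ \emph{heavy} if $|f([N])\cap DB|\ge \tau$ and \emph{light} otherwise. The idea is to store the answer $\sigma(DB)\coloneqq\sum_{y\in f([N])\cap DB}c(y)$ explicitly for every heavy box, and to compute it on the fly for every light box by enumeration through function inversion.

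As the underlying inversion primitive I would reuse the auxiliary function $g$ from the proof of Theorem~\ref{thm:range}, but instantiate Corollary~\ref{cor:inversion} with parameter $\beta\coloneqq 3\alpha/4$ in place of $\alpha$. Since the domain of $g$ has size $\Tilde{O}(N)$, this yields an inversion structure of size $\Tilde{O}(N^{1-\beta/3})=\Tilde{O}(N^{1-\alpha/4})$ answering each query in $\Tilde{O}(N^{\beta})=\Tilde{O}(N^{3\alpha/4})$ time. Every dyadic-box emptiness test or image-point extraction performed below is routed through this structure, exactly as in Theorem~\ref{thm:range}.

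Preprocessing proceeds in $\Tilde{O}(N)$ time by first hashing $f(0),\ldots,f(N-1)$ to obtain the distinct set $f([N])$, then iterating over each $y\in f([N])$ and, for the $(w+1)^d=\Tilde{O}(1)$ dyadic boxes containing $y$, updating two hash-table counters indexed by the box: one for $|f([N])\cap DB|$ and one for $\sigma(DB)$. Finally I discard every entry whose point count falls below $\tau$. Since $\sum_{DB}|f([N])\cap DB|\le|f([N])|\cdot(w+1)^d=\Tilde{O}(N)$, only $\Tilde{O}(N/\tau)=\Tilde{O}(N^{1-\alpha/4})$ heavy boxes survive. To answer a query on $B$, I would partition $B$ into $\Tilde{O}(1)$ dyadic pieces $DB$ as in Theorem~\ref{thm:range}. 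A single hash lookup decides whether $DB$ is heavy; if so, I add the stored $\sigma(DB)$ to the running total. Otherwise $DB$ is light, and I enumerate $f([N])\cap DB$ by the same recursive dyadic descent used in Theorem~\ref{thm:range}: each visited non-empty box issues at most $2^d$ inversion queries to detect its non-empty children, and at the single-point leaves we recover each $y$ and add $c(y)$ to the total. The descent visits $\Tilde{O}(k)$ non-empty boxes, where $k=|f([N])\cap DB|<\tau$, yielding $\Tilde{O}(\tau\cdot N^{3\alpha/4})=\Tilde{O}(N^{\alpha})$ time per light piece.

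The main thing to get right is the simultaneous balancing of the three resource constraints: the inversion-structure size, the heavy-box hash table, and the light-box enumeration time must all fit under the target bounds at once. The choices $\tau=N^{\alpha/4}$ and $\beta=3\alpha/4$ are forced by the two equations $N/\tau=N^{1-\beta/3}$ (heavy storage equals inversion space) and $\tau\cdot N^{\beta}=N^{\alpha}$ (light-box enumeration time equals the target query time), and this is precisely where the degradation of the space bound from $N^{1-\alpha/3}$ in Theorem~\ref{thm:range} to $N^{1-\alpha/4}$ here originates.
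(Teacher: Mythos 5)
Your proposal is correct and matches the paper's proof essentially step for step: the same big/small decomposition over dyadic boxes, explicit storage of sums for heavy boxes, and recursive dyadic descent via inversion on $g$ for light boxes. The only cosmetic difference is that you choose the threshold $\tau=N^{\alpha/4}$ and inversion parameter $\beta=3\alpha/4$ up front, whereas the paper works with threshold $N^{\alpha/3}$ and inversion parameter $\alpha$, obtains query time $\Tilde{O}(N^{4\alpha/3})$, and rescales $\alpha\mapsto 3\alpha/4$ at the end; these are the same parameter choices presented in a different order.
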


\begin{proof} $ $
    Consider again the function $g$ defined in the proof of Theorem~\ref{thm:range}, with the same inversion data structure using $\Tilde{O}(N^{1-\alpha/3})$ words of space which can invert $g$ at any given point in $\Tilde{O}(N^{\alpha})$ time. 

    For every dyadic box $DB(y,i)$ which contains at least $N^{\alpha/3}$ elements of $f([N])$, store that box 
    as a key with value $\sum_{y\in f(N)\cap DB(y,i)} c(y)$ in a dictionary. This dictionary can be implemented in any number of ways, for instance using hash tables or binary search trees. As each element of $f([N])$ appears in at most $O(\log^d(2^{\mathrm{polylog}(N)}))=\Tilde{O}(1)$ dyadic boxes, the number of dyadic boxes with at least $N^{\alpha/3}$ elements is at most $\Tilde{O}(N^{1-\alpha/3})$. Thus, this does not increase space usage significantly.

    Now, suppose we are given a dyadic box $DB(y,i)$ and want to find the number of elements of $f([N])$ in that box. We start by checking if this box contains at least $N^{\alpha/3}$ elements by a lookup in the dictionary, and we can report the stored number if it is the case. Otherwise, the box contains at most $N^{\alpha/3}$ elements and we recurse on all non-empty children of $DB(y,i)$. We keep a global counter $s$ to which we add $c(z)$ each time we reach a non-empty box of the form $DB(z,0)$. Because each element of $f([N])$ appears in at most $\Tilde{O}(1)$ dyadic boxes, this requires $\Tilde{O}(N^{\alpha/3})$ inversion queries, for a total query time of $\Tilde{O}(N^{\alpha/3}\cdot N^{\alpha}) = \Tilde{O}(N^{4\alpha/3})$.

    Given a non-dyadic box, we can again partition it into $O(\log^d (2^{\mathrm{polylog}(N)})) = \Tilde{O}(1)$ dyadic boxes and query each of these separately.

    Finally, rescaling $\alpha$ by a factor of $3/4$ gives the result.
\end{proof}

The previous proof can easily be adapted to support other types of associative ("semigroup") operations instead of a sum, such as taking the maximum in a box.

\section{Beyond range queries}    
\label{sec:beyond}

We now give general results on the existence of data structures for other, more complex types of queries. The proofs are elementary and only rely on Theorems~\ref{thm:range} and \ref{thm:counting}. Note that some of these results can be proved directly using similar ideas as for Theorems~\ref{thm:range} and \ref{thm:counting} and with a slightly better running time if we take polylogarithmic factors into account. 

\subsection{Predecessor, ranking, and selection queries}

We first observe that the range searching data structures described above directly yield data structures for predecessor, ranking, and selection queries.
In the following two lemmas, we let $f \colon[N]\to [2^{w}]$ be a function computable in $\Tilde{O}(1)$ time, where $w=\mathrm{polylog}(N)$.

\begin{theorem}[Predecessor search]
    For any choice of a constant $0<\alpha <1$, we can construct in $\Tilde{O}(N)$ time with high probability a data structure using $\Tilde{O}(N^{1-\alpha/3})$ words of space which supports the following query:
    Given an integer $y\in [2^{w}]$, report $x\in [N]$ such that $f(x)$ is the largest value in $f([N])$ that is smaller or equal to $y$, if it exists, in $\Tilde{O}(N^\alpha)$ time.
\end{theorem}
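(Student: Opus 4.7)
The plan is to reduce predecessor search to $\tilde O(1)$ range searching queries, invoking Theorem~\ref{thm:range} as a black box (in the $d=1$ case). The data structure is exactly the one from Theorem~\ref{thm:range}, so the space bound $\tilde O(N^{1-\alpha/3})$ and the $\tilde O(N)$ construction time come for free; what needs to be described is only the query algorithm.

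To answer a predecessor query for $y \in [2^w]$, I would first decompose the one-sided interval $[0,y]$ into the canonical set of at most $w+1 = \mathrm{polylog}(N)$ pairwise disjoint dyadic intervals $I_1, I_2, \ldots, I_k$, ordered by their right endpoints. I would then scan these from right to left and use the range-searching query of Theorem~\ref{thm:range} on each $I_j$ to locate the rightmost index $j^\star$ such that $I_{j^\star} \cap f([N]) \neq \emptyset$. If no such index exists, no predecessor is reported; otherwise, the answer $f(x)$ necessarily lies in $I_{j^\star}$ and is larger than every element of $f([N])$ lying in $I_1\cup\cdots\cup I_{j^\star-1}$ as well as larger than every element of $f([N])$ to the right of $y$ that lies within $I_{j^\star}$ (there are none by construction).

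Once $I_{j^\star}$ is fixed, I would find the maximum of $f([N]) \cap I_{j^\star}$ by a top-down dyadic descent: at each step, split the current dyadic interval into its two halves and use a range-searching query to test whether the right half is non-empty; if so, recurse on the right half, otherwise recurse on the left half (which must be non-empty by the invariant). After at most $w+1 = \mathrm{polylog}(N)$ halvings, the current interval is a singleton $\{z\}$, and a final inversion/range query returns an $x$ with $f(x)=z$; this is the predecessor.

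Correctness follows from the two invariants that (i) the rightmost non-empty $I_j$ contains the predecessor, and (ii) within any dyadic interval, the maximum element lies in the right half whenever that half is non-empty. For the cost, the number of range-searching queries is $O(w) = \mathrm{polylog}(N) = \tilde O(1)$, each costing $\tilde O(N^\alpha)$ by Theorem~\ref{thm:range}, so the total query time is $\tilde O(N^\alpha)$. I do not anticipate a real obstacle here; the only mild care needed is checking that a one-sided interval $[0,y]$ (rather than a general range $[a,b]$) admits a dyadic cover of polylogarithmic size, which is standard.
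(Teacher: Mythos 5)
Your proof is correct and relies on the same black box as the paper, namely $\tilde O(1)$ range-emptiness/reporting queries to the Theorem~\ref{thm:range} structure with $d=1$, but the reduction itself is routed differently. The paper performs a direct bisection on the \emph{value}: it binary-searches for the largest $z$ such that the (non-dyadic) interval $[z,y-1]$ still meets $f([N])$, exploiting the monotonicity of this predicate, which takes $O(w)=\mathrm{polylog}(N)$ queries and yields a witness $x$ directly from the reporting query. You instead decompose $[0,y]$ into $O(w)$ canonical dyadic intervals, locate the rightmost non-empty one, and then perform a separate top-down dyadic descent inside it to isolate the maximum. Both are valid and give the same $\tilde O(N^\alpha)$ query time from the same $\tilde O(N^{1-\alpha/3})$-space structure. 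The paper's route is slightly leaner because Theorem~\ref{thm:range} already accepts arbitrary (non-dyadic) boxes, so there is no need to manually manage a dyadic cover or a two-phase search; your route is closer in spirit to how predecessor search proceeds in an $x$-fast trie and is a little more mechanical, but buys nothing extra here. Either argument would serve.
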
  
\begin{proof}
    We bisect on the positive integers $z\leq y$. For such an integer $z$, we define $B \coloneqq [z,y]$ and use the data structure of Theorem~\ref{thm:range} with $d=1$ to decide whether there exists $x$ such that $f(x)\in B$ in time $\tilde{O}(N^{\alpha})$. The number of such queries is at most $\log (2^{w}) = w = \mathrm{polylog}(N)$, hence the overall cost remains $\tilde{O}(N^{\alpha})$.
\end{proof}

\begin{theorem}[Ranking and Selection]
    For any choice of a constant $0<\alpha <1$, we can construct in $\Tilde{O}(N)$ time with high probability a data structure using $\Tilde{O}(N^{1-\alpha/4})$ words of space which supports the following queries in $\Tilde{O}(N^\alpha)$ time:
    \begin{description}
    \item[Ranking.] Given an integer $y\in [2^{w}]$, return $\left| \strut\{x\in [N] : f(x)<y\}\right|$.
    \item[Selection.] Given an integer $k\in [N]$, report $x\in [N]$ such that $f(x)$ is the $k$th largest value in $f([N])$, if it exists.
    \end{description}
\end{theorem}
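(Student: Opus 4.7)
The plan is to reduce both queries to the data structures of Theorems~\ref{thm:range} and~\ref{thm:counting}, using the one-dimensional case $d=1$ throughout.

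For the ranking query, I would observe that $|\{x\in [N] : f(x) < y\}|$ is exactly a one-dimensional range counting query: take the box $B \coloneqq [0, y-1]$ and use the weight function $c \equiv 1$. Plugging this into Theorem~\ref{thm:counting} directly gives a structure using $\tilde{O}(N^{1-\alpha/4})$ space that answers the query in $\tilde{O}(N^\alpha)$ time, with preprocessing $\tilde{O}(N)$. No additional ingredient is needed.

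For the selection query, I would build both the range counting structure of Theorem~\ref{thm:counting} (with $d=1$, $c\equiv 1$) and the range searching structure of Theorem~\ref{thm:range} (with $d=1$) on $f$, at a combined space cost of $\tilde{O}(N^{1-\alpha/4})$. Given $k$, I would binary search over the value range $[2^w]$ for the unique integer $y^\star$ such that $|\{x : f(x) < y^\star\}| < k \leq |\{x : f(x) \leq y^\star\}|$; each step of the search uses one ranking query, which by the previous paragraph costs $\tilde{O}(N^\alpha)$. Since the value range has size $2^w$ with $w = \mathrm{polylog}(N)$, the binary search uses only $O(w) = \mathrm{polylog}(N)$ ranking queries, so its total cost remains $\tilde{O}(N^\alpha)$. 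Once $y^\star$ is identified, I would invoke Theorem~\ref{thm:range} on the (degenerate) box $\{y^\star\}$ to recover some $x$ with $f(x) = y^\star$ in $\tilde{O}(N^\alpha)$ time, completing the query.

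There is no substantial obstacle: the main point is that $w = \mathrm{polylog}(N)$, so binary searching over $[2^w]$ absorbs only a polylogarithmic overhead, which is hidden by the $\tilde{O}$ notation. A mild subtlety is the interpretation of ``$k$-th largest'': I would treat it as selection in the multiset $\{f(x) : x \in [N]\}$ (equivalently, counting with multiplicity), which is what the ranking query above computes; reversing the inequality in the binary search handles ``largest'' versus ``smallest'' symmetrically. Preprocessing time for both structures is $\tilde{O}(N)$ with high probability, matching the statement.
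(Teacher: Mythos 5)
Your proposal matches the paper's proof essentially step for step: ranking is a single one-dimensional query to the counting structure with $c\equiv 1$ on the interval $[0,y-1]$, and selection bisects over $[2^w]$ using $O(w)=\mathrm{polylog}(N)$ ranking queries, absorbed by the $\tilde{O}$ notation. Your added step of recovering a witness $x$ with $f(x)=y^\star$ via a degenerate-box query to Theorem~\ref{thm:range} makes explicit something the paper leaves implicit, and is a sensible completion; the only small slip is your remark that the ranking query counts with multiplicity, whereas Theorem~\ref{thm:counting} with $c\equiv 1$ actually returns $|f([N])\cap[0,y-1]|$, the number of \emph{distinct} image values in range (a discrepancy between statement and proof that is already present in the paper itself, so it is not a new gap you introduced).
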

\begin{proof}
    For ranking, we let the function $c$ be such that $c(x)=1$ for all $x\in [2^{w}]$, and use the data structure of Theorem~\ref{thm:counting} with $d=1$. Querying the interval $B \coloneqq [y]$ in this structure yields the answer in time $\Tilde{O}(N^\alpha)$.

    For selection, we bisect over the values $y\in [2^{w}]$ and repeatedly query the data structure on the interval $B \coloneqq [y]$. This requires $\mathrm{polylog}(N)$ steps, hence time $\Tilde{O}(N^\alpha)$ overall as well.
\end{proof}

\subsection{Ranking and selection with range constraints}

By applying Theorems~\ref{thm:range} and \ref{thm:counting} with $d>1$, we can also perform predecessor, ranking, and selection queries within a given box.

\begin{theorem}
\label{thm:countinrange}
    Let $f \colon [N]\to [2^{w}]^d$ and $\mu \colon [2^{w}]^d\to [2^{w}]$ be functions computable in $\Tilde{O}(1)$ time, where $w=\mathrm{polylog}(N)$ and $d=O(1)$. For any choice of a constant $0<\alpha <1$, we can construct in $\Tilde{O}(N)$ time with high probability a data structure using $\Tilde{O}(N^{1-\alpha/4})$ words of space which supports the following queries in $\Tilde{O}(N^\alpha)$ time:
    \begin{description}
    \item[Ranking in a box.] Given a $d$-dimensional box $B$ and an integer $y\in [2^{w}]$, return the value
    
    $\left|\strut\{f(x): x\in [N], f(x)\in B, \mu(f(x))<y\}\right|$.
    \item[Selection in a box.] Given a $d$-dimensional box $B$ and an integer $k\in [N]$, report $x\in [N]$ such that $\mu(f(x))$ is the $k$th largest value in $\{\mu(f(x)) : x\in [N], f(x)\in B \}$, if it exists.
    \item[Median in a box.] Given a $d$-dimensional axis aligned box $B$, report $x\in [N]$ such that $\mu(f(x))$ is the median of $\{\mu(f(x)) : x\in [N], f(x)\in B\}$.
    \end{description}
\end{theorem}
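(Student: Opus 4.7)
The plan is to reduce all three queries to $(d+1)$-dimensional range counting and range searching on a single lifted function. I would define $F \colon [N] \to [2^{w}]^{d+1}$ by $F(x) \coloneqq (f(x), \mu(f(x)))$; since $f$ and $\mu$ are each computable in $\Tilde{O}(1)$ time and $d+1 = O(1)$, both Theorem~\ref{thm:range} and Theorem~\ref{thm:counting} apply to $F$ with the same parameters $N$ and $w$. The preprocessing then consists of building the range counting structure of Theorem~\ref{thm:counting} on $F$ with constant weight $c \equiv 1$, together with the range searching structure of Theorem~\ref{thm:range} on $F$. This takes $\Tilde{O}(N)$ time with high probability and occupies $\Tilde{O}(N^{1-\alpha/4})$ words in total, since the range searching structure is strictly smaller at $\Tilde{O}(N^{1-\alpha/3})$.

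For ranking in a box, I would query the counting structure with the $(d+1)$-dimensional box $B \times [0, y-1]$. Since $\mu$ is determined by its argument, distinct values of $F(x)$ are in bijection with distinct values of $f(x)$, so the returned sum equals $\left|\{f(x) : x \in [N],\ f(x) \in B,\ \mu(f(x)) < y\}\right|$, exactly the quantity asked for.

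For selection in a box, I would perform a binary search over $y \in [2^{w}]$ using the ranking-in-a-box query to locate the value $y^\ast$ corresponding to the desired rank $k$, and then invoke the range searching structure of Theorem~\ref{thm:range} on the degenerate box $B \times [y^\ast, y^\ast]$ to produce an explicit witness $x \in [N]$ with $f(x) \in B$ and $\mu(f(x)) = y^\ast$. The binary search uses $O(w) = O(\mathrm{polylog}(N))$ ranking queries, each costing $\Tilde{O}(N^\alpha)$, and the final range searching call adds one more $\Tilde{O}(N^\alpha)$ query, so the total query cost is $\Tilde{O}(N^\alpha)$. Median in a box reduces to one counting query over $B \times [0, 2^{w}-1]$ to obtain the total count $n_B$ of $f([N]) \cap B$, followed by one selection-in-a-box call with $k = \lceil n_B / 2 \rceil$.

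Since the lifting is uniform and each individual query is already supplied by Theorems~\ref{thm:range} and~\ref{thm:counting}, no genuinely new idea is needed beyond setting up $F$. The detail I would most want to verify carefully in the write-up is the off-by-one and tie-breaking behavior of the binary search for selection: different $y \in f([N]) \cap B$ may share the same $\mu$-value, so the distinct-image count returned by the ranking query treats each element of $f([N]) \cap B$ with multiplicity one regardless of how often its $\mu$-value is repeated, and one has to translate the outcome of the binary search back into ``the $k$th largest value in $\{\mu(f(x)) : x \in [N], f(x) \in B\}$'' with the correct convention. This is the main (minor) obstacle; everything else is bookkeeping on top of the two black-box theorems.
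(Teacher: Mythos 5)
Your proposal matches the paper's proof essentially verbatim: both lift to $g(x)=(f(x),\mu(f(x)))$, query the counting structure of Theorem~\ref{thm:counting} on $B\times[y]$ with $c\equiv 1$ for ranking, binary-search over $y$ for selection, and reduce median to a count followed by a selection. You are in fact slightly more careful than the paper — you explicitly add the Theorem~\ref{thm:range} structure to extract a witness $x$ once $y^\ast$ is located, and you flag the genuine subtlety that the ranking query counts distinct $f$-values rather than distinct $\mu$-values, which affects the tie-breaking convention for ``$k$th largest''; the paper's proof leaves both of these implicit.
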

\begin{proof}
    To perform ranking of $y$ in a box $B$, we can construct the data structure of Theorem~\ref{thm:counting} on the function $g \colon [N]\to [2^{w}]^{d+1}$ defined by
    \[
    g(x) \coloneqq (f(x), \mu(f(x))),
    \]
    let the count function $c$ for the data structure be equal to 1 everywhere, and query the data structure with the $(d+1)$-dimensional box $B\times [y]$ in time $\Tilde{O}(N^\alpha)$.

    Selection in a box can be done as for ranking, except we now bisect on the values $y$, which only increases the running time by a $\mathrm{polylog}(N)$ factor.

    Finally, we can select the median by first counting the number $m$ of values in the preimage of $B$, then performing selection with $k=m/2$.
\end{proof}

Note that this result can be used to select a value among preimages, as follows.

\begin{theorem}[Preimage queries] \label{lem:preimage}
    Let $f\colon[N]\to [2^{w}]$ be a function computable in $\Tilde{O}(1)$ time, where $w=\mathrm{polylog}(N)$.
    For any choice of a constant $0<\alpha <1$, we can construct in $\Tilde{O}(N)$ time with high probability a data structure using $\Tilde{O}(N^{1-\alpha/4})$ words of space which supports the following queries in $\Tilde{O}(N^\alpha)$ time:
    \begin{description}
    \item[Preimage ranking.] Given two integers $y,z\in [2^{w}]$, return
      \[\left|\strut\{x\in [N] : f(x)=y, x<z\}\right|.\]
    \item[Preimage selection.] Given an integer $k\in [N]$ and an integer $y\in [2^{w}]$, report the $k$th largest value $x\in [N]$ such that $f(x)=y$, if it exists.
    \item[Preimage median.] Given an integer $y\in [2^{w}]$, return the median of the values $x\in [N]$ such that $f(x)=y$, if such a value exists.
    \end{description}
\end{theorem}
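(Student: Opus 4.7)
The plan is to reduce all three preimage queries to the ``in a box'' primitives of Theorem~\ref{thm:countinrange} via a graph-encoding trick. First I would set $w' \coloneqq \max(w, \lceil \log_2 N \rceil) = \mathrm{polylog}(N)$ and define the injective auxiliary function $F \colon [N] \to [2^{w'}]^2$ by $F(x) \coloneqq (f(x), x)$; since $f$ is computable in $\Tilde{O}(1)$ time, so is $F$, and the image $F([N])$ is literally the graph of $f$. I would then apply Theorem~\ref{thm:countinrange} to $F$ with $d=2$, taking $\mu \colon [2^{w'}]^2 \to [2^{w'}]$ to be the projection $\mu(a,b) \coloneqq b$ (i.e.\ $\mu(F(x)) = x$). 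This yields a structure of size $\Tilde{O}(N^{1-\alpha/4})$ answering ranking-in-a-box, selection-in-a-box, and median-in-a-box queries in $\Tilde{O}(N^\alpha)$ time, which matches the bounds in the statement.

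Next, I would translate each preimage query to a query on this structure using the degenerate ``vertical slab'' box $B \coloneqq [y,y] \times [0, 2^{w'}-1]$, which precisely isolates the preimage of $y$ under $f$:
\begin{itemize}
\item \textbf{Preimage ranking} $(y,z)$: call ranking-in-a-box on $B$ with threshold $z$. The returned value is $|\{x : F(x) \in B,\ \mu(F(x)) < z\}| = |\{x : f(x)=y,\ x<z\}|$.
\item \textbf{Preimage selection} $(y,k)$: call selection-in-a-box on $B$ with rank $k$, which returns some $x^\star$ with $\mu(F(x^\star)) = x^\star$ equal to the $k$th largest element of $\{x : f(x)=y\}$.
\item \textbf{Preimage median} $y$: first compute $m \coloneqq |\{x : f(x)=y\}|$ by a preimage ranking query with $z = 2^{w'}$; if $m=0$ report that no such value exists, otherwise run preimage selection with $k = \lceil m/2 \rceil$.
\end{itemize}

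There is no substantive obstacle here: the entire content of the proof is the idea of appending $x$ as an auxiliary coordinate so that preimage questions become slab questions on the graph of $f$. The only mild technicality is that $2^w$ could be smaller than $N$, which is why I enlarge the width to $w'=\mathrm{polylog}(N)$ so that $F$ indeed maps into a $[2^{w'}]^2$ grid of the form required by Theorem~\ref{thm:countinrange}. Everything else is a direct translation, and the $\Tilde{O}(\cdot)$ factors are absorbed in the single application of that theorem.
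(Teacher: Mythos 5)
Your proposal is correct and takes essentially the same route as the paper: the paper also defines $g(x) \coloneqq (f(x), x)$ and reduces preimage ranking, selection, and median to the corresponding in-a-box queries of Theorem~\ref{thm:countinrange}, using the vertical slab $\{y\}\times[N]$ as the query box. You are slightly more explicit than the paper in two harmless ways: you pad the coordinate width to $w' = \max(w, \lceil \log_2 N\rceil)$ so the codomain is genuinely of the required form $[2^{w'}]^2$ (the paper writes the codomain as $[2^w]\times[N]$ without comment), and for the median query you route through a count-then-select pair rather than invoking the median-in-a-box primitive of Theorem~\ref{thm:countinrange} directly, which the paper does; both choices are fine and cost only $\Tilde{O}(1)$ extra overhead.
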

\begin{proof}
    Let $g \colon [N]\to [2^{w}]\times[N]$ be the function defined by
    \[
    g(x) \coloneqq (f(x), x).
    \]
     Preimage ranking, selection, and median then reduces directly to ranking, selection, and median in a box respectively, on the function $g$.
     The result then follows from Theorem~\ref{thm:countinrange}.
\end{proof}

\subsection{Range reporting}

One can also consider the problem of \emph{reporting} distinct values $x$ such that $f(x)$ is contained in a specific box. The following can be obtained from Theorem~\ref{thm:range}.


\begin{theorem}[Range reporting]

\label{thm:reportinrange}
    Let $f \colon [N]\to [2^{w}]^d$ be computable in $\Tilde{O}(1)$ time, where $w=\mathrm{polylog}(N)$ and $d=O(1)$. For any choice of $0<\alpha <1$, we can construct in $\Tilde{O}(N)$ time with high probability a data structure using $\Tilde{O}(N^{1-\alpha/3})$ words of space which supports the following query in $\Tilde{O}((k+1)N^\alpha)$ time:
    Given a $d$-dimensional box $B$ and and an integer $k\in [N]$, return $k$ distinct values $x\in [N]$ such that $f(x)\in B$, if they exist. (If only $k'<k$ such values exist, they are all returned in $\Tilde{O}((k'+1)N^\alpha)$ time.)
\end{theorem}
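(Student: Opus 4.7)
The plan is to reduce range reporting to repeated calls to the range searching data structure of Theorem~\ref{thm:range}, applied to an \emph{augmented} function whose last coordinate is the index $x$ itself, so that partitioning the index range automatically produces distinct reported values.

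Concretely, I would define $g \colon [N] \to [2^{w'}]^{d+1}$ by $g(x) \coloneqq (f(x), x)$, where $w' \coloneqq \max(w, \lceil \log N \rceil) = \mathrm{polylog}(N)$. At preprocessing time, I build the data structure of Theorem~\ref{thm:range} on $g$ in dimension $d+1$, which uses $\Tilde{O}(N^{1-\alpha/3})$ words of space, is built in $\Tilde{O}(N)$ time w.h.p., and supports range searches in $\Tilde{O}(N^\alpha)$ time. To answer a query $(B,k)$, I would recursively split the index interval: starting from $I_0 \coloneqq [0, N-1]$, the subroutine on an interval $I$ queries the structure on the $(d+1)$-dimensional box $B \times I$. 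If the query is empty, the branch terminates; otherwise it returns some $x \in I$ with $f(x) \in B$, which I add to the output, and I then recurse on $I \cap [0, x-1]$ and $I \cap [x+1, N-1]$. The process halts once $k$ values have been reported.

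Correctness is immediate: the subintervals visited at any step are pairwise disjoint, so reported $x$'s are automatically distinct, and a branch is pruned only when no index in $I$ satisfies $f(x) \in B$. For the running time, the recursion tree has at most $k$ internal nodes (each yielding one output) and hence at most $k+1$ leaves (each an empty query), so only $O(k+1)$ range searches are performed, for a total of $\Tilde{O}((k+1)N^\alpha)$. The main conceptual point, and the one I expect to be the only subtlety, is to encode the index directly into the function rather than, say, maintaining an explicit exclusion list of already-reported values: augmenting the codomain with $x$ cleanly avoids duplicates while keeping the augmented function within a polylogarithmic codomain, so Theorem~\ref{thm:range} applies verbatim and preserves the $\Tilde{O}(N^{1-\alpha/3})$ space bound claimed in the statement.
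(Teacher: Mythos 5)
Your proof is correct and uses the same core idea as the paper: augment the codomain with the index by setting $g(x)\coloneqq(f(x),x)$, apply Theorem~\ref{thm:range} to $g$ in dimension $d+1$, and restrict on the last coordinate to enumerate distinct witnesses. The only difference is in how the $k$ witnesses are isolated: the paper performs a bisection search on the last coordinate (costing $O(\mathrm{polylog}(N))$ range queries per reported value), whereas you exploit the fact that Theorem~\ref{thm:range} already returns a witness $x$, and split the index interval around that witness, so the recursion tree has at most $k$ internal nodes and $k+1$ leaves and you issue only $O(k+1)$ range queries total. Your variant is slightly cleaner and shaves a polylog factor, but both give the claimed $\Tilde{O}((k+1)N^\alpha)$ bound, and the space, preprocessing, and correctness arguments are identical.
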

\begin{proof}
    From the data structure of Theorem~\ref{thm:range} constructed on the function $g(x) \coloneqq (f(x), x)$, we can decide in time $\Tilde{O}(N^\alpha)$ whether there exists any $x\in [N]$ such that $f(x)\in B$, and $x<y$ for a given box $B$ and integer $y$. This allows to perform a bisection search on the coordinates of $B$ and $y$ and isolate up to $k$ distinct values $x$ such that $f(x)\in B$, in $\mathrm{polylog}(N)$ time for each.
\end{proof}

Function inversion was recently used by McCauley~\cite{MC24} to improve the space efficiency of approximate nearest neighbor data structures. As a building block towards this goal, the following was shown.

\begin{theorem}[All-Function Inversion Data Structure \cite{MC24}]
    Let $f_0,f_1\ldots,f_{R-1} \colon [N]\to [2^{w}]$ be $R$ functions each computable in $\Tilde{O}(1)$ time, where $w=\mathrm{polylog}(N)$ and $d=O(1)$. For any choice of $0<\alpha <1$, we can construct in $\Tilde{O}(RN)$ time with high probability a data structure using $\Tilde{O}(N+RN^{1-\alpha/3})$ words of space which for any query $(i,y) \in [R]\times [2^{w}] $ returns $f_i^{-1}(y)$ in $\Tilde{O}((|f_i^{-1}(y)|+1)N^\alpha)$ time.
\end{theorem}

Note that this result can now be obtained by simply applying Theorem \ref{thm:reportinrange} to each $f_i$, and querying the relevant data structure with $k=N$ and $B=\{y\}$. In fact, this even saves on the $\Tilde{O}(N)$ term in the space used.

\section{Applications}
\label{sec:app}

We can apply our general method not only to new variants of problems for which function inversion was already known to be useful~\cite{GoldsteinKLP17,Golovnev2020,Kopelowitz2019,DBLP:conf/stacs/BilleGLPRS24,AronovESZ23}, but also to indexing variants of a number of other previously studied problems in computational geometry~\cite{Chazelle1987,DBLP:journals/siamcomp/ColeSSS89,DBLP:journals/algorithmica/AgarwalASS93,Bespamyatnikh2004,BarbaCILOS19,CardinalS23}.

\subsection{Variants of 3SUM-indexing}

Recall that the 3SUM-indexing problem consists of preprocessing two sets $S_1$ and $S_2$, each of $n$ integers, so that for any integer $y$, one can decide efficiently whether there exist $x_1\in S_1$ and $x_2\in S_2$ such that $y=x_1+x_2$~\cite{GoldsteinKLP17,Golovnev2020,Kopelowitz2019}.

As an example of a simple generalization that we can tackle using our results, we consider the following generalization of the $k$-SUM-indexing problem to arbitrary constant-degree polynomial functions, yielding a data structure version of the recently studied 3POL and $k$-POL problems~\cite{BarbaCILOS19,CardinalS23}. 

\begin{problem}[$k$-POL-indexing problem]
    Let $S_1,S_2,\ldots ,S_k$ be $k$ sets of $n$ numbers in $[2^{w}]$, where $k=O(1)$ and $w=\mathrm{polylog}(n)$, and let $p\in \mathbb{N}[x_1,x_2,\ldots ,x_k]$ be a constant-degree $k$-variate polynomial with coefficients in $[2^{w}]$.
    Preprocess these sets and $p$ to answer queries of the following form: Given $k$ intervals $B_1,B_2,\ldots ,B_k\subset [2^{w}]$ and an integer $y$, are there $k$ values $x_1\in B_1\cap S_1, x_2\in B_2\cap S_2,\ldots ,x_k\in B_k\cap S_k$ such that $p(x_1,x_2,\ldots ,x_k)=y$?  
\end{problem}

\begin{theorem}[$k$-POL-indexing problem]
    For any choice of a constant $0<\alpha <1$, we can construct in $\Tilde{O}(n^k)$ time with high probability a data structure for the $k$-POL-indexing problem using $\Tilde{O}(n^{k-\alpha/3})$ words of space which supports queries in $\Tilde{O}(n^\alpha)$ time.
\end{theorem}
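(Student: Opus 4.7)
The plan is to reduce the $k$-POL-indexing problem to a single application of Theorem~\ref{thm:range}. Define a function $f \colon [n^k] \to [2^{w'}]^{k+1}$ as follows. Fix an arbitrary ordering of each $S_i$ so that every $S_i$ is available as an array indexed by $[n]$, permitting constant-time access to its elements. Identify an integer in $[n^k]$ with a tuple $(i_1,\ldots,i_k)\in [n]^k$ via its base-$n$ representation, and set
\[
f(i_1,i_2,\ldots ,i_k) \coloneqq \bigl(S_1[i_1],\,S_2[i_2],\,\ldots ,\,S_k[i_k],\,p(S_1[i_1],S_2[i_2],\ldots ,S_k[i_k])\bigr).
\]
Because $p$ has constant degree and constant arity with coefficients in $[2^w]$, its value on any input in $[2^w]^k$ is a non-negative integer bounded by $2^{O(w)}$, so we may take $w' = O(w) = \mathrm{polylog}(n)$. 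With this choice, $f$ is computable in $\tilde O(1)$ time, its domain has size $n^k$, and its codomain is an integer grid of dimension $d = k+1 = O(1)$.

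Next I would observe that every query reduces to a single range search against $f$. Given intervals $B_1,\ldots,B_k$ and an integer $y$, form the $(k+1)$-dimensional axis-aligned box
\[
B \coloneqq B_1 \times B_2 \times \cdots \times B_k \times [y, y].
\]
By construction, $f(i_1,\ldots,i_k) \in B$ if and only if $S_j[i_j] \in B_j$ for all $j$ and $p(S_1[i_1],\ldots,S_k[i_k]) = y$. So deciding the query is exactly a range-searching query on $f$, which is answered by Theorem~\ref{thm:range}.

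Finally I would fit the parameters. Applying Theorem~\ref{thm:range} to $f$ with parameter $\alpha'$ yields a data structure built in $\tilde O(n^k)$ time with high probability, using $\tilde O\bigl((n^k)^{1-\alpha'/3}\bigr)$ words of space and supporting range searching in $\tilde O\bigl((n^k)^{\alpha'}\bigr)$ time. Setting $\alpha' \coloneqq \alpha / k$ gives the desired bounds of $\tilde O(n^{k-\alpha/3})$ space and $\tilde O(n^{\alpha})$ query time. The only mildly delicate point, and where I would be most careful, is confirming that the codomain of $f$ really stays within $[2^{\mathrm{polylog}(n)}]$ in every coordinate so that Corollary~\ref{cor:inversion} (and hence Theorem~\ref{thm:range}) is applicable; this follows from the constant-degree, constant-arity assumption on $p$ together with the bound $w = O(\mathrm{polylog}(n))$ on the bit length of inputs and coefficients.
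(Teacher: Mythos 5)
Your proof is correct and takes essentially the same approach as the paper: define $f$ on $[n^k]$ mapping index tuples to the $(k+1)$-tuple of chosen values together with the polynomial evaluation, then answer queries with the box $B_1\times\cdots\times B_k\times\{y\}$ via Theorem~\ref{thm:range}. You are in fact slightly more explicit than the paper in spelling out the rescaling $\alpha' = \alpha/k$ needed so that $(n^k)^{\alpha'} = n^\alpha$ and $(n^k)^{1-\alpha'/3}=n^{k-\alpha/3}$.
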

\begin{proof}
    We can assume without loss of generality that the values of the polynomial $p$ lie in $[2^{w'}]$ for some $w'\geq w$ such that $w'=\mathrm{polylog} (n)$.
    We define the function $f\colon[n^k]\to [2^{w'}]^{k+1}$ that maps a $k$-tuple $(i_1,i_2,\ldots ,i_k)$ to the corresponding $(k+1)$-tuple of integers $(x_1,x_2,\ldots ,x_k,p(x_1,x_2,\ldots ,x_k))$, where $x_j$ is the $i_j$th largest value in $S_j$. We then construct the data structure of Theorem~\ref{thm:range} with $d=k+1$ and $N=n^k$, and use it to answer queries with the $(k+1)$-dimensional boxes $B \coloneqq B_1\times B_2\times\ldots\times B_k\times \{y\}$.
\end{proof}

Note that from Theorem~\ref{thm:countinrange}, with a slight penalty in space, we can also answer ranking and selection queries in the preimage $p^{-1}(y)\cap B$ in sublinear time. 

\subsection{Substring search}

Function inversion has been applied to string searching problems before~\cite{Corrigan-Gibbs2019,DBLP:conf/stacs/BilleGLPRS24}.
Let us first consider the following problem:

\begin{problem}[Birange proximity]
    Given an array $A$ of size $n$ containing elements of $[n]$. Preprocess it to answer queries of the form: Given intervals $[i,j]$, $[k,l]$, and $[\delta,\beta]$ all with endpoints in $[n]$,  report all pairs $(x,y)$, $x \in [i,j]$ and $y \in [k, l]$, such that $|A[x]-A[y]| \in [\delta,\beta]$.
\end{problem}
We can define $f\colon[n]^2  \to [n]^3$ by $f(x,y)\coloneqq (x,y,|A[x]-A[y]|)$.
Answering a birange proximity query is thus equivalent to determining the pairs $(x,y)$ with $f(x,y)=(x,y,\gamma)$, $x \in [i,j]$, $y\in [k,\ell]$, and $\gamma \in [\delta,\beta]$. 
Thus, by applying Theorem~\ref{thm:reportinrange} we immediately obtain:

\begin{lemma}[Birange proximity]
    For any choice of a constant $0<\alpha <1$, we can construct in $\Tilde{O}(n^2)$ time with high probability a data structure for the birange proximity problem using $\Tilde{O}(n^{2-\alpha/3})$ words of space which supports queries that return $k$ pairs in time $\Tilde{O}((k+1)n^\alpha)$.
\end{lemma}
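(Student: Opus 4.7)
The plan is to apply Theorem~\ref{thm:reportinrange} as a black box to the function $f$ already described in the paragraph preceding the lemma statement. Concretely, I would identify the domain $[n]\times[n]$ with $[n^2]$ and set $f(x,y)\coloneqq (x,y,|A[x]-A[y]|)$, mapping into $[n]^3$. This matches the hypotheses of Theorem~\ref{thm:reportinrange} with $N=n^2$, $d=3$, and $w=\lceil\log n\rceil=O(\mathrm{polylog}(n))$; the function $f$ is evaluated in constant time via two lookups into $A$ and one subtraction, so it is in particular computable in $\tilde O(1)$ time as required.

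A birange proximity query on intervals $[i,j]$, $[k,\ell]$, and $[\delta,\beta]$ then translates verbatim to a range reporting query on the axis-aligned box $B\coloneqq [i,j]\times[k,\ell]\times[\delta,\beta]\subseteq [n]^3$: the pairs $(x,y)$ to be reported are exactly those elements of the domain with $f(x,y)\in B$. Invoking Theorem~\ref{thm:reportinrange} therefore returns up to $k$ such preimages, which are automatically distinct because the first two coordinates of $f$ already recover the input pair $(x,y)$. (Note the minor notational clash: the parameter $k$ in the lemma statement serves both as an interval endpoint and as the number of pairs to report; this is harmless but worth flagging once in the proof.)

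The only remaining bookkeeping is the translation of the exponents. Substituting $N=n^2$ into the bounds of Theorem~\ref{thm:reportinrange} yields $\tilde O(n^{2-2\alpha/3})$ space and $\tilde O((k+1)n^{2\alpha})$ query time, with $\tilde O(n^2)$ construction time. Applying the theorem with $\alpha/2$ in place of $\alpha$ gives the claimed $\tilde O(n^{2-\alpha/3})$ space bound and $\tilde O((k+1)n^{\alpha})$ query time. I do not expect any genuine obstacle in this proof: the content is essentially packaging, and the one place requiring care is simply verifying that a constant rescaling of $\alpha$ reconciles the bounds inherited from Theorem~\ref{thm:reportinrange} with those stated in the lemma.
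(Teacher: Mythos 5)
Your proof is correct and follows essentially the same route as the paper: define $f(x,y)=(x,y,|A[x]-A[y]|)$, reduce a birange proximity query to a single box-reporting query, and apply Theorem~\ref{thm:reportinrange} with $N=n^2$. You also make explicit the $\alpha\mapsto\alpha/2$ rescaling that the paper leaves implicit, which is a useful clarification; no gaps.
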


Bille et al.~\cite{DBLP:conf/stacs/BilleGLPRS24} consider the following problem:
\begin{problem}[Gapped string indexing]
    Given a string $S$ of length $n$, prepossess it to support queries of the form: Given two strings $P_1$ and $P_2$, and an interval $[\delta,\beta]$, report all $(x,y)$ such that $P_1$ is a substring starting at location $x$ in $S$, $P_2$ is a substring starting at location $y$ in $S$, and $|x-y| \in [\delta,\beta]$.
\end{problem}

The suffix array $A$ of a string $S$ is a fundamental data structure in stringology due to Manber and Meyers~\cite{DBLP:journals/siamcomp/ManberM93}. It has $A[j]=i$ if $S[i\ldots n]$ is the $j$th lexicographically largest suffix of $S$, and it can be computed in linear time~\cite{DBLP:journals/jacm/Farach-ColtonFM00,DBLP:journals/jda/KoA05, DBLP:journals/jda/KimSPP05}.
Bille et al.~\cite{DBLP:conf/stacs/BilleGLPRS24} observe that given the suffix array $A$ of $S$, which is a permutation of $[n]$, the locations of all occurrences of a substring $P$ are the values in a consecutive interval of $A$, and this interval can be computed in time $O(|P|)$ using a suffix tree (the compressed trie of suffixes, which can also be computed in linear time~\cite{F97,DBLP:journals/jacm/Farach-ColtonFM00}). Using a birange proximity structure, initialized with the suffix array of $S$, a gapped string indexing query in $S$ can be converted into a birange proximity query at cost $O(|P_1|+|P_2|)$. We therefore immediately recover the headline result of~\cite{DBLP:conf/stacs/BilleGLPRS24}:

\begin{theorem}[Gapped string indexing]
        For any choice of a constant $0<\alpha <1$, we can construct in $\Tilde{O}(n^2)$ time with high probability a data structure for the gapped string indexing problem using $\Tilde{O}(n^{2-\alpha/3})$ words of space which supports queries that return $k$ pairs in time $\Tilde{O}(|P_1|+|P_2|+(k+1)n^\alpha)$.
\end{theorem}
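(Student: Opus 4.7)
The plan is to reduce the gapped string indexing problem to the birange proximity problem on the suffix array, which was already handled by the preceding lemma. First I would preprocess the input string $S$ by building its suffix array $A$ using Farach-Colton's linear-time algorithm, together with a suffix tree (also computable in $O(n)$ time) augmented with pointers to the suffix-array intervals at each node. These standard structures occupy $O(n)$ space and allow, given any query pattern $P$, the computation of the interval $[i,j]$ of suffix-array positions whose associated suffixes have $P$ as a prefix, in $O(|P|)$ time; these are precisely the indices $x$ such that $A[x]$ is the starting position of an occurrence of $P$ in $S$.

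Next I would build the birange proximity data structure from the previous lemma on top of the array $A$, which is a permutation of $[n]$. This construction takes $\tilde{O}(n^2)$ time with high probability and uses $\tilde{O}(n^{2-\alpha/3})$ words of space, matching the bounds claimed in the theorem.

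For a query consisting of patterns $P_1$, $P_2$, and a gap interval $[\delta,\beta]$, I would first use the suffix tree to obtain, in $O(|P_1|+|P_2|)$ time, the suffix-array intervals $[i,j]$ and $[k,\ell]$ corresponding to occurrences of $P_1$ and $P_2$, respectively. Since a pair of occurrences $(x',y')$ of $(P_1,P_2)$ with $|x'-y'|\in[\delta,\beta]$ corresponds exactly to a pair of suffix-array positions $(x,y)\in [i,j]\times[k,\ell]$ with $|A[x]-A[y]|\in[\delta,\beta]$, a single birange proximity query on $([i,j],[k,\ell],[\delta,\beta])$ returning $k$ pairs suffices. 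By the previous lemma, this runs in $\tilde{O}((k+1)n^\alpha)$ time. Combining the two phases gives the overall query time $\tilde{O}(|P_1|+|P_2|+(k+1)n^\alpha)$.

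No step presents a genuine obstacle: the only things to check are the well-known correspondence between pattern occurrences and contiguous ranges in the suffix array, and the observation that the gapped condition on occurrence positions translates verbatim into the birange proximity condition on suffix-array values. Everything else is inherited as a black box from the birange proximity lemma and from standard stringology.
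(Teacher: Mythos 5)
Your proposal is correct and matches the paper's argument essentially verbatim: both reduce gapped string indexing to the birange proximity lemma by building the suffix array and suffix tree, converting the two patterns to suffix-array intervals in $O(|P_1|+|P_2|)$ time, and then issuing a single birange proximity query on those intervals together with the gap interval. There is no material difference in decomposition or in the bounds derived.
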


While the problem we call birange proximity is implicit in the presentation of Bille et al.~\cite{DBLP:conf/stacs/BilleGLPRS24}, their reductions make use of other problems; here the use of birange proximity along with the framework we have developed makes the proof immediate and amenable to many possible variations. We state one possible generalization, where instead of two strings, $d$~substrings are to be matched, with $d\ge2$ a constant.  We impose pairwise restrictions on the allowable distances between the matches, and additionally we restrict the occurrence of each substring to an interval in the base string.

\begin{problem}[Generalized gapped string indexing]
    Given a string $S$ of length $n$ and a constant nonnegative integer $d$, preprocess to support queries of the form: Given strings $P_0, P_1, \ldots P_{d-1}$, 
    intervals $[\delta_i,\beta_j]$ for all $i,j\in [d]$, $i<j$, 
intervals $[\gamma_i,\xi_i]$, for all $i \in [d]$    
    report all $(x_0, x_1, \ldots x_{d-1})$ such that $P_i$ is a substring starting at location $x_i$ in $S$, $x_i \in [\gamma_i,\xi_i]$ and $|x_i-x_j| \in [\delta_i,\beta_j]$ for all $i,j\in [d]$, $i<j$. The counting variant returns the number of $(x_0, x_1, \ldots , x_{d-1})$ meeting this condition.
\end{problem}

Using Theorems~\ref{thm:countinrange} and ~\ref{thm:reportinrange} with $N=\Tilde{O}(n^k)$, we directly obtain the following.

\begin{theorem}[Generalized gapped string indexing]
        For any choice of a constant $0<\alpha <1$, we can construct in $\Tilde{O}(n^d)$ time with high probability a data structure for the generalized gapped string indexing problem using $\Tilde{O}(n^{d-\alpha/3})$ words of space which supports queries that return $k$ pairs in time $\Tilde{O}(\sum_{i=1}^d |P_i|+(k+1)n^\alpha)$.
        For the counting variant, queries take time $\Tilde{O}(\sum_{i=1}^d |P_i|+n^\alpha)$ and the space usage is $\Tilde{O}(n^{d-\alpha/4})$.
\end{theorem}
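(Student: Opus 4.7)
The plan is to reduce the generalized gapped string indexing problem to a single constant-dimensional orthogonal range reporting (resp.\ counting) query on the image of an implicit function, and then invoke Theorem~\ref{thm:reportinrange} (resp.\ Theorem~\ref{thm:countinrange}).

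For the preprocessing, I would first compute the suffix array $A$ of $S$ together with a suffix tree in $\Tilde{O}(n)$ time, so that for any query pattern $P_i$ a contiguous interval $[s_i, e_i]$ of $A$ whose entries are exactly the starting positions of the occurrences of $P_i$ in $S$ can be retrieved in $O(|P_i|)$ time. I then define the function
\[
f \colon [n]^d \to [n]^{2d + \binom{d}{2}},\qquad f(r_0,\ldots,r_{d-1}) \coloneqq \bigl(r_0,\ldots,r_{d-1},\; A[r_0],\ldots,A[r_{d-1}],\; \bigl(|A[r_i]-A[r_j]|\bigr)_{0\le i<j<d}\bigr),
\]
which is $O(1)$-time computable given $A$. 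Setting $N \coloneqq n^d$, I apply Theorem~\ref{thm:reportinrange} to $f$ with the parameter $\alpha/d$ in place of $\alpha$; since $d = O(1)$ this is legal and yields a data structure built in $\Tilde{O}(N) = \Tilde{O}(n^d)$ time, using $\Tilde{O}(N^{1-\alpha/(3d)}) = \Tilde{O}(n^{d-\alpha/3})$ space and answering reporting queries in $\Tilde{O}((k+1)N^{\alpha/d}) = \Tilde{O}((k+1)n^\alpha)$ time. The counting variant is obtained identically by substituting Theorem~\ref{thm:countinrange}, giving $\Tilde{O}(n^{d-\alpha/4})$ space and $\Tilde{O}(n^\alpha)$ query time.

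At query time, I use the suffix tree to compute the rank intervals $[s_i, e_i]$ for the given patterns in total $O(\sum_i |P_i|)$ time, and then assemble the $(2d + \binom{d}{2})$-dimensional axis-aligned box $B$ as the Cartesian product of the rank intervals $[s_i,e_i]$, the location intervals $[\gamma_i, \xi_i]$, and the pairwise-gap intervals $[\delta_{ij},\beta_{ij}]$. Forwarding $B$ to the data structure returns exactly the tuples meeting all substring, location, and gap constraints, via their $A$-coordinates. The main subtlety I foresee is the absolute-value constraint $|x_i - x_j| \in [\delta_{ij}, \beta_{ij}]$, which is not natively axis-aligned in the coordinates $(x_0,\ldots,x_{d-1})$; I resolve this by promoting each $|A[r_i]-A[r_j]|$ to its own output coordinate of $f$, turning each gap constraint into a one-dimensional interval constraint. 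The $\binom{d}{2} = O(1)$ extra dimensions are absorbed by the hypotheses of Theorems~\ref{thm:reportinrange} and~\ref{thm:countinrange}, and rescaling $\alpha$ by the constant factor $1/d$ absorbs the $n^d$ domain blowup into the stated exponents.
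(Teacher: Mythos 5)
Your proposal is correct and matches the paper's (very terse) approach: reduce via the suffix array/suffix tree to a constant-dimensional orthogonal range query on an implicit function, then invoke Theorems~\ref{thm:reportinrange} and~\ref{thm:countinrange} with $N=n^d$ and a rescaled $\alpha$. The paper only states "Using Theorems~\ref{thm:countinrange} and~\ref{thm:reportinrange} with $N=\Tilde{O}(n^d)$, we directly obtain the following," so your spelled-out construction of $f$ on $[n]^d$ with output coordinates for the ranks $r_i$, the positions $A[r_i]$, and the pairwise gaps $|A[r_i]-A[r_j]|$ is exactly the intended instantiation, with the $\alpha \mapsto \alpha/d$ rescaling correctly delivering the claimed $\Tilde{O}(n^{d-\alpha/3})$ and $\Tilde{O}(n^{d-\alpha/4})$ space bounds and $\Tilde{O}(n^\alpha)$ query times.
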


\subsection{Geometric applications}

The following result follows directly from Theorem~\ref{thm:countinrange} and will be useful in applying our method to specific computational geometry problems.

\begin{lemma}[Searching tuples of points in space]
\label{lem:points}
Let $S$ be a set of $n$ points on the $d$-dimensional integer grid $[2^{w}]^d$, where $w=\mathrm{polylog}(n)$ and $d=O(1)$.
Let $\delta :[2^w]^{td}\to [2^{w}]$, for some constant $t>1$, be any integer function mapping a $t$-tuple of points in $S$ to an integer, and computable in $\tilde{O}(1)$. 

    For any choice of a constant $0<\alpha <1$, we can construct in $\Tilde{O}(n^t)$ time with high probability a data structure using $\Tilde{O}(n^{t-\alpha/4})$ words of space which supports the following queries in $\Tilde{O}(n^\alpha)$ time:
    \begin{description}
    \item[Ranking in a box.] Given $t$ $d$-dimensional boxes $B_1,B_2,\ldots,B_t$ and an integer $y$, return the number of $t$-tuples $T$ of distinct points of $S$ in $B_1\times B_2\times\cdots\times B_t$ such that $\delta (T)<y$. 
    \item[Selection in a box.] Given $t$ $d$-dimensional boxes $B_1,B_2,\ldots,B_t$ and an integer $k$, return the $k$th largest value of $\delta (T)$ among all $t$-tuples $T$ of distinct points of $S$ in $B_1\times B_2\times\cdots\times B_t$.
    \item[Median in a box] Given $t$ $d$-dimensional boxes $B_1,B_2,\ldots,B_t$, return the median of $\delta (T)$ among all $t$-tuples $T$ of distinct points of $S$ in $B_1\times B_2\times\cdots\times B_t$.
    \end{description}
\end{lemma}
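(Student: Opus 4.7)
The plan is to apply Theorem~\ref{thm:countinrange} to a function indexing all $t$-tuples of points in $S$, with an auxiliary map $\mu$ crafted to encode both the value of $\delta$ and the distinctness constraint. Fix an arbitrary ordering $p_1,\ldots,p_n$ of the points of $S$, set $N \coloneqq n^t$, and define $f\colon [n^t]\to [2^{w}]^{td}$ by $f(i_1,\ldots,i_t) \coloneqq (p_{i_1},\ldots,p_{i_t})$ under the natural bijection $[n^t] \cong [n]^t$; the $t$ query boxes combine into the single $td$-dimensional box $B \coloneqq B_1\times\cdots\times B_t$ in the codomain of $f$. To match the claimed bounds I would invoke Theorem~\ref{thm:countinrange} with parameter $\alpha' \coloneqq \alpha/t$, which yields space $\Tilde{O}(N^{1-\alpha'/4}) = \Tilde{O}(n^{t-\alpha/4})$, query time $\Tilde{O}(N^{\alpha'}) = \Tilde{O}(n^{\alpha})$, and preprocessing time $\Tilde{O}(n^t)$, exactly the target bounds.

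The principal obstacle is enforcing distinctness, since Theorem~\ref{thm:countinrange} only filters tuples through a box and ranks them by a scalar $\mu$-value, with no native mechanism to exclude tuples whose indices repeat. My plan is to absorb the distinctness test directly into $\mu$, using the fact that $S$ is a set: a tuple of its points is coordinatewise distinct iff the corresponding indices are pairwise distinct, a test computable in $O(1)$ time since $t = O(1)$. Setting $M \coloneqq 2^w$ and working over a slightly enlarged universe $[2^{w+2}]$ (still of size $2^{\mathrm{polylog}(n)}$, so Theorem~\ref{thm:countinrange} applies after embedding $f$'s codomain inside $[2^{w+2}]^{td}$), I would define
\[
\mu(q_1,\ldots,q_t) \coloneqq \begin{cases} \delta(q_1,\ldots,q_t) + 2M & \text{if } q_1,\ldots,q_t \text{ are pairwise distinct,} \\ 0 & \text{otherwise.} \end{cases}
\]
This places every distinct tuple at a $\mu$-value in $[2M,\,2M+2^w)$ and every non-distinct tuple at $0$, cleanly separating the two populations while preserving the $\delta$-order on the distinct part.

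With this encoding, ranking in a box for threshold $y$ becomes the difference of two Theorem~\ref{thm:countinrange} ranking queries on $B$, taken at thresholds $y+2M$ and $2M$: the former counts all tuples in $B$ that are either non-distinct or distinct with $\delta<y$, the latter counts only the non-distinct ones. Because non-distinct tuples are pinned to the lowest $\mu$-value, the $k$-th largest $\mu$-value in $B$ (for $k$ at most the number $m$ of distinct tuples in $B$, itself obtained from one ranking query) corresponds precisely to the tuple attaining the $k$-th largest $\delta$ among distinct tuples in $B$, and the sought $\delta$-value is recovered by subtracting $2M$; this handles selection, and median follows by applying selection with $k=\lceil m/2\rceil$. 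Each of the three queries uses $O(1)$ invocations of the underlying Theorem~\ref{thm:countinrange} structure, so the claimed time and space bounds are met.
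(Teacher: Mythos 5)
Your proposal is correct and follows the same core strategy as the paper: index all ordered $t$-tuples of points, combine the $t$ query boxes into a single $td$-dimensional box, feed $\delta$ to Theorem~\ref{thm:countinrange} as the auxiliary scalar function $\mu$, and rescale $\alpha$ by a factor of $1/t$ to hit the claimed bounds. The one place you diverge is in how the distinctness constraint on the tuple is enforced. The paper restricts the domain of the indexing function to ordered $t$-tuples of \emph{pairwise distinct} elements of $[n]$, taking $N = n!/(n-t)! < n^t$ and invoking Theorem~\ref{thm:countinrange} once per query; degenerate tuples simply never appear in the domain. You instead take the full domain $[n^t]$ and absorb the distinctness test into $\mu$, pinning degenerate tuples to the sentinel value $0$ and shifting legitimate $\delta$-values into $[2M,\,2M+2^w)$, then recovering the three answers by a constant number of calls plus a subtraction (and a computation of the degenerate count $m'$ for selection/median). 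Both routes are sound and give the same bounds. Your version makes explicit something the paper glosses over --- the domain really must be an integer interval $[N]$, and your choice $N=n^t$ avoids the need for a constant-time bijection onto the set of injective $t$-tuples --- at the modest cost of a constant factor more queries per operation and a slightly enlarged codomain for $\mu$.
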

\begin{proof}
We denote by $p_{i,\ell}$ the $\ell$th coordinate of the $i$th point of $S$, for $i\in [n]$.
Consider the function $g$ mapping an ordered $t$-tuple of pairwise distinct elements of $[n]$ to $[2^{w}]^{td}$, defined as
\[
    g((i_1,i_2,\ldots ,i_t)) := (
    p_{i_1,1}, p_{i_1,2}, \ldots ,p_{i_1,d},
    p_{i_2,1}, p_{i_2,2}, \ldots ,p_{i_2,d},\ldots ,
    p_{i_t,1}, p_{i_t,2}, \ldots ,p_{i_t,d}).
\]
The result follows by applying Theorem~\ref{thm:countinrange} to the functions $g$ and $\delta$, with $N=n!/(n-t)!<n^t$.
\end{proof}

One simple example of a function $\delta$ for $t=2$ is the squared Euclidean distance between two points, and we then get a data structure for distance selection and ranking. 
The following result provides a data structure version of the distance selection problem~\cite{DBLP:journals/algorithmica/AgarwalASS93,DBLP:conf/compgeom/Goodrich93,DBLP:journals/siamcomp/KatzS97}.

\begin{problem}[Distance selection in boxes]
Given a set $S$ of $n$ points on the two-dimensional integer grid $[2^{w}]^2$, where $w=\mathrm{polylog}(n)$, preprocess it to answer queries of the form: Given two two-dimensional boxes $B_1$ and  $B_2$ and an integer $k$, return the pair of points with the $k$th largest distance among all pairs in $(S\cap B_1)\times(S\cap B_2)$.
\end{problem}
\begin{theorem}[Distance selection in boxes]
    For any choice of a constant $0<\alpha <1$, we can construct in $\Tilde{O}(n^2)$ time with high probability a data structure for the distance selection in boxes problem using $\Tilde{O}(n^{2-\alpha/4})$ words of space which supports queries in $\Tilde{O}(n^\alpha)$ time.
\end{theorem}

Similarly, we can obtain the same bounds for the ranking problem, hence the problem of counting the number of pairs of points in $(S\cap B_1)\times(S\cap B_2)$ whose distance is at most some given integer.

For $t=3$, we can define the function $\delta$ as twice the area of the triangle defined by the three points. Note that by Pick's Theorem, the area is half-integer. Lemma~\ref{lem:points} then directly yields a data structure for the following problem.

\begin{problem}[Triangle area selection in boxes]
Given a set $S$ of $n$ points on the two-dimensional integer grid $[2^{w}]^2$, where $w=\mathrm{polylog}(n)$, preprocess it to answer queries of the form: Given three two-dimensional boxes $B_1, B_2, B_3$ and an integer $k$, return the triangle with the $k$th largest area among all those formed by three points in $(S\cap B_1)\times(S\cap B_2)\times(S\cap B_3)$.
\end{problem}
\begin{theorem}[Triangle area selection in boxes]
    For any choice of a constant $0<\alpha <1$, we can construct in $\Tilde{O}(n^3)$ time with high probability a data structure for the triangle area selection in boxes problem using $\Tilde{O}(n^{3-\alpha/4})$ words of space which supports queries in $\Tilde{O}(n^\alpha)$ time.
\end{theorem}
This problem has previously been considered by Chazelle \cite{Chazelle1987} in the ``one-shot'' setting where there are no query boxes, $k$ is fixed, and the points have real coordinates. 

Other problems of interest involve functions with divisions or radicals, for which our approach focusing on integer functions does not seem to apply at first glance. In many cases we can circumvent this problem, due to the fact that a limited precision is enough to decide which of two values is larger. One bound implied for example by the work of Burnikel et al.\ \cite{BFMS00} is the following:

\begin{lemma}[\cite{BFMS00}]\label{lem:precision}
    Let $f\colon[2^w]^d\to \mathbb{R}$ be a function which has a constant-sized expression consisting of the operators $+$,$-$,$\times$, $/$, $\sqrt{\cdot}$, $O(w)$-bit constants 
    and the arguments of $f$. Then computing $f$ up to $O(w)$ bits of precision is enough to decide if $f(p)\leq f(q)$ for any $p,q \in [2^w]^d$.
\end{lemma}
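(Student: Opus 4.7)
The plan is to reduce the comparison $f(p) \leq f(q)$ to determining the sign of $D \coloneqq f(p) - f(q)$. Since $f$ is built from a constant-size arithmetic expression over $\{+,-,\times,/,\sqrt{\cdot}\}$ with $O(w)$-bit constants, and since the coordinates of $p$ and $q$ are themselves $O(w)$-bit integers, $D$ is itself an algebraic expression of constant size whose leaves are $O(w)$-bit rationals. Granted a separation bound of the form ``either $D = 0$ or $|D| \geq 2^{-Cw}$'' for some absolute constant $C$ depending only on the structure of the expression, the lemma follows immediately: one computes $f(p)$ and $f(q)$ each to $(C+2)w$ bits of precision using standard error-controlled floating-point arithmetic, so that the computed $\tilde{D}$ satisfies $|\tilde{D} - D| < 2^{-Cw}/2$; the sign of $D$ can then be read off from $\tilde{D}$ when $|\tilde{D}| \geq 2^{-Cw}/2$, and $D = 0$ is certified otherwise.

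Establishing the separation bound is the technical heart of the argument, and I would follow the inductive framework of Burnikel, Funke, Mehlhorn, and Schirra. Attach to every node $E$ of the expression tree three quantities: an upper bound $u(E)$ on the magnitudes of all complex conjugates of the algebraic number $\mathrm{val}(E)$, an analogous bound $\ell(E)$ controlling its reciprocal when $\mathrm{val}(E) \neq 0$, and its algebraic degree $\deg(E)$. These propagate through the five allowed operations by elementary rules: $+,-$ combine the $u$'s additively, $\times,/$ combine them multiplicatively, and $\sqrt{\cdot}$ takes square roots while at most doubling the degree. Because the expression tree has constant size and each leaf has bit length $O(w)$, the recursion terminates with $\deg(D) = O(1)$ and $\log u(D), \log \ell(D) = O(w)$. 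The classical algebraic separation inequality---for a nonzero algebraic number of degree $\delta$ and suitably measured height $H$, its absolute value is at least $H^{1-\delta}$ up to routine adjustments---then yields $|D| \geq 2^{-O(w)}$ whenever $D \neq 0$.

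The main obstacle is purely bookkeeping: verifying that none of the finitely many operations---in particular $\sqrt{\cdot}$ and $/$, which interact nontrivially with both the degree and the height---blows $\log u$ or $\log \ell$ past $O(w)$. The constant-size assumption on the expression is precisely what makes this work, since each internal operation multiplies $\deg$ by at most an $O(1)$ factor and adds only an $O(w)$ term to $\log u$ and $\log \ell$; after the $O(1)$ operations present in the tree the bounds remain at $2^{O(w)}$ and the degree at $O(1)$. Plugging these into the separation inequality produces a constant $C$ that absorbs all accumulated factors, yielding the $O(w)$-bit precision claim.
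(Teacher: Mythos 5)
The paper does not prove this lemma; it is imported wholesale from Burnikel, Funke, Mehlhorn, and Schirra, with \cite{BFMS00} serving as the entire justification. So there is no ``paper proof'' to compare against, and what you have written is a reconstruction of the cited argument.

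Your reconstruction is sound. The reduction to determining the sign of $D \coloneqq f(p)-f(q)$, the use of a separation bound $|D|\ge 2^{-O(w)}$ for nonzero $D$, and the observation that a constant-size expression with $O(w)$-bit leaves has $O(1)$ algebraic degree and height bounds $2^{O(w)}$, all correctly capture why $O(w)$ bits of precision suffice. The fact that the expression tree has constant size is indeed exactly what keeps the degree $O(1)$ and keeps $\log u$ and $\log \ell$ at $O(w)$ after the bottom-up propagation, and your handling of the ``$D=0$'' case (certify equality when $|\tilde D|$ falls below the separation threshold) is the right way to close the argument. One small inaccuracy: your framing of $u(E)$ as a bound on ``all complex conjugates of $\mathrm{val}(E)$'' is the Mignotte/Canny-style (or later BFMSS/Li--Yap) framing; the original BFMS bound uses a somewhat different pair $(u,\ell)$ with asymmetric propagation rules for $\div$ and $\sqrt{\cdot}$ (division swaps the roles of $u$ and $\ell$, and the separation bound involves $u^{D^2-1}\ell$ where $D$ is the product of radical indices), and is not stated in terms of conjugates. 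This does not affect the conclusion for constant-size expressions, since every such scheme yields $|D|\ge 2^{-O(w)}$ here, but it is worth noting that you are describing the family of results rather than the specific inductive invariant BFMS maintain.
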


In what follows, given a function $f$ for which this lemma applies, we will use the notation $f^\bullet$ to denote the function computed to the required $O(w)$ of bits of precision, scaled to have an integer domain. Thus, $f^\bullet$ is a function $[2^w]^d \to [2^{O(w)}]$. Note that $f^\bullet$ is always computable in $\mathrm{poly}(w) = \Tilde{O}(1)$ time.

An example here is the case $t=2$, $d=2$, and $\delta (p_1,p_2) = f^\bullet(p_1,p_2)$, where $f(p_1,p_2)$ is defined as the slope of the line through $p_1,p_2$.
This yields a data structure version of the well-studied slope selection problem~\cite{DBLP:journals/siamcomp/ColeSSS89,DBLP:journals/ipl/Matousek91a,DBLP:journals/ipl/KatzS93,DBLP:journals/comgeo/BronnimannC98}.
It also applies directly to a notion from robust statistics: the \textit{Theil-Sen estimator} of a set $S$ of points in $\mathbb{R}^2$, defined as the median of the slopes of all the lines through two points of $S$. It is considered as a more robust estimator for linear regression than the classical least-square estimator, and commonly used in practice. 

\begin{problem}[Linear regression in a box]
Given a set $S$ of $n$ points on the two-dimensional integer grid $[2^{w}]^2$, where $w=\mathrm{polylog}(n)$, preprocess it to answer queries of the form: Given a two-dimensional box $B$, return the Theil-Sen estimator of the points in $S\cap B$.
\end{problem}
Applying Lemma~\ref{lem:points}, we directly obtain the following.

\begin{theorem}[Linear regression in a box]
    For any choice of a constant $0<\alpha <1$, we can construct in $\Tilde{O}(n^2)$ time with high probability a data structure for linear regression in a box using $\Tilde{O}(n^{2-\alpha/4})$ words of space which supports queries in $\Tilde{O}(n^\alpha)$ time.
\end{theorem}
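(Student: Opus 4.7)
The plan is to apply the ``median in a box'' branch of Lemma~\ref{lem:points} with $t=2$, $d=2$, $B_1 = B_2 = B$, and $\delta$ set to the slope of the line through the two input points, computed with bounded precision. This reduces the problem almost mechanically to an already-proved statement.

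First, I would verify that the slope function satisfies the hypothesis of Lemma~\ref{lem:precision}: for $p_i = (x_i, y_i) \in [2^w]^2$, the quantity $(y_2 - y_1)/(x_2 - x_1)$ is a constant-size expression in $+,-,/$ applied to the coordinates, so $\delta := \mathrm{slope}^\bullet$ has codomain $[2^{O(w)}]$, is computable in $\Tilde{O}(1)$ time, and preserves the ordering of slope values between any two pairs of points. Invoking Lemma~\ref{lem:points} with this $\delta$ and $B_1 = B_2 = B$ gives, after $\Tilde{O}(n^2)$ preprocessing and using $\Tilde{O}(n^{2-\alpha/4})$ space, an $\Tilde{O}(n^\alpha)$-time query returning the median of $\delta(p,q)$ over ordered distinct pairs $(p,q) \in (S \cap B)^2$. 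Since the slope is symmetric in its two arguments, the multiset of values over ordered pairs is the multiset over unordered pairs with each value doubled, so this median coincides with the Theil-Sen estimator of $S \cap B$.

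The one piece that requires care, and the main technicality in the proof, is the case $x_1 = x_2$, where the slope is undefined. The cleanest workaround is to have $\delta$ assign such pairs a dedicated sentinel value strictly larger than every finite slope value, so that vertical pairs occupy one end of the ordered list; the value returned is then the Theil-Sen estimator under any of the standard conventions for handling vertical pairs, or can be corrected by shifting the rank using a count of vertical pairs in $B$ obtained once via an auxiliary application of Theorem~\ref{thm:counting}. All of this stays inside the definition of $\delta$ and costs $\Tilde{O}(1)$ per evaluation, so the bounds of Lemma~\ref{lem:points} carry over verbatim, yielding the claimed space, preprocessing, and query guarantees.
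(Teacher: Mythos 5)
Your proposal is correct and follows the paper's own one-line proof, namely invoking Lemma~\ref{lem:points} with $t=d=2$ and $\delta$ set to the bounded-precision slope function $f^\bullet$. You add a bit more rigor than the paper does (explicitly checking Lemma~\ref{lem:precision}, noting that symmetry of the slope makes the ordered-pair median agree with the unordered Theil--Sen median, and handling vertical pairs with a sentinel), but the underlying approach is the same.
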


Another example is the following problem, which is a data structure variant of a problem previously considered by Bespamyatnikh and Segal \cite{Bespamyatnikh2004}.
\begin{problem}[Hyperplane distance selection]
    Given $n$ points on the $d$-dimensional grid $[2^w]^d$, preprocess them to answer queries of the following form: Given $k$ and $d$ boxes $B_1,\ldots,B_d$, return the hyperplane with the $k$th largest distance to the origin, among all hyperplanes spanned by $d$ points in $B_1\times B_2 \times \cdots \times B_d$.
\end{problem}
Let $f$ be the function which maps $d$-tuples of points in $[2^w]^d$ to the distance between the origin and the hyperplane they span. Note that for any constant $d$, we can express this function in a way to which Lemma~\ref{lem:precision} applies. Thus, applying Lemma~\ref{lem:points} with $\delta = f^\bullet$ yields the following.
\begin{theorem}[Hyperplane distance selection]
    For any choice of a constant $0<\alpha <1$, we can construct in $\Tilde{O}(n^d)$ time with high probability a data structure for hyperplane distance selection using $\Tilde{O}(n^{d-\alpha/4})$ words of space which supports queries in $\Tilde{O}(n^\alpha)$ time.
\end{theorem}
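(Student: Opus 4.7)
The plan is to invoke Lemma~\ref{lem:points} with $t = d$ and $\delta = f^\bullet$, where $f$ assigns to each ordered $d$-tuple $(p_1,\ldots,p_d)$ of points of the input set the distance from the origin to the unique affine hyperplane they span. The query boxes $B_1,\ldots,B_d$ are then exactly the $t = d$ boxes required by the ``selection in a box'' interface of the lemma, and a query with parameter $k$ returns the $d$-tuple whose $\delta$-value is the $k$th largest, from which the desired hyperplane can be read off in $O(1)$ time.

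The main content of the proof is checking that $f$ satisfies the hypotheses of Lemma~\ref{lem:precision}. Writing the hyperplane through $p_1,\ldots,p_d$ as $a^\top x = b$, the coordinates of the normal $a$ are (up to sign) the $(d-1)\times(d-1)$ minors of the matrix whose rows are $p_2 - p_1,\ldots,p_d - p_1$, and $b = a^\top p_1$. For constant $d$ these are fixed-size polynomial expressions in the input coordinates, hence computable with constantly many additions, subtractions, and multiplications of $w$-bit integers. The distance from the origin is then $|b|/\sqrt{a^\top a}$, a constant-size expression in $+$, $-$, $\times$, $/$, $\sqrt{\cdot}$, and $O(w)$-bit constants, so Lemma~\ref{lem:precision} applies and $f^\bullet\colon[2^w]^{d\cdot d}\to [2^{O(w)}]$ is computable in $\Tilde{O}(1)$ time while preserving the total order on distances.

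With this in hand, we directly invoke Lemma~\ref{lem:points} on $S$ with $t=d$ and $\delta = f^\bullet$, absorbing the $O(w)$-bit codomain into a larger polylog-size $w'$ exactly as is done in the proof of the $k$-POL-indexing theorem. This yields preprocessing time $\Tilde{O}(n^d)$, space $\Tilde{O}(n^{d-\alpha/4})$, and query time $\Tilde{O}(n^\alpha)$, matching the announced bounds. The only minor subtlety worth flagging is the bookkeeping for degenerate $d$-tuples (those with $a = 0$, spanning no hyperplane): such tuples are detected in $O(1)$ time and can be given a sentinel $\delta$-value, or equivalently counted and offset against the query parameter~$k$. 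The real content of the argument is thus concentrated in Lemma~\ref{lem:precision}; once that precision bound is available, the theorem is a direct application of the general framework of Section~\ref{sec:beyond}.
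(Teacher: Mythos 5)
Your proposal is correct and follows the same route as the paper: apply Lemma~\ref{lem:points} with $t=d$ and $\delta = f^\bullet$, using Lemma~\ref{lem:precision} to justify working with a bounded-precision integer version of the origin-to-hyperplane distance. You simply spell out what the paper leaves implicit (the minor-based formula for the normal, the expression $|b|/\sqrt{a^\top a}$, and the handling of degenerate tuples), which is a reasonable elaboration rather than a different argument.
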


As a last example, let us consider the following problem, which is the main focus of previous work by Aronov et al.\ \cite{AronovESZ23}.

\begin{problem}[Collinearity indexing with queries on a line]
Given two sets $S_1$ and $S_2$ of $n$ points on the two-dimensional integer grid $[2^{w}]^2$, where $w=\mathrm{polylog}(n)$, preprocess them to answer queries of the form: Given a point $q$ on the vertical axis, are there two points $p_1\in S_1$ and $p_2\in S_2$ such that $q$, $p_1$, and $p_2$ are collinear?
\end{problem}

Aronov et al. showed the following result, which now also follows from a straightforward application of our previous results and Lemma~\ref{lem:precision}.

\begin{theorem}[Theorem 1.1 of \cite{AronovESZ23}]
    For any choice of a constant $0<\alpha <1$, we can construct in $\Tilde{O}(n^2)$ time with high probability a data structure for collinearity indexing with queries on a line using $\Tilde{O}(n^{2-\alpha/3})$ words of space which supports queries in $\Tilde{O}(n^\alpha)$ time.
\end{theorem}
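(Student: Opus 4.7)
The plan is to reduce each collinearity query to a function inversion query on the integer labels of intersection points with $\ell$. Assume $\ell$ is the vertical line $x = c$. For $p_1 = (a_1,b_1) \in S_1$ and $p_2 = (a_2,b_2) \in S_2$ with $a_1 \neq a_2$, the line through $p_1, p_2$ meets $\ell$ at the point whose $y$-coordinate is
\[
h(p_1, p_2) = b_1 + \frac{(c - a_1)(b_2 - b_1)}{a_2 - a_1}.
\]
A query $q = (c, q_y)$ admits a collinear pair in $S_1 \times S_2$ iff $q_y = h(p_1, p_2)$ for some such pair, apart from the easily handled degeneracies noted below.

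I would define $f \colon [n]^2 \to [2^{O(w)}]$ by $f(i, j) \coloneqq h^\bullet(p_1^{(i)}, p_2^{(j)})$, where $h^\bullet$ is the integer-valued quantization of $h$ furnished by Lemma~\ref{lem:precision}. Since $h$ is a constant-size rational expression on $O(w)$-bit integer inputs, the lemma guarantees that $O(w)$ bits of precision suffice to distinguish distinct real values of $h$; hence $h^\bullet(p_1,p_2) = h^\bullet(p_1',p_2')$ iff the intersection points actually coincide. Consequently, $f$ assigns a canonical integer label to each achievable intersection point on $\ell$ and is computable in $\Tilde{O}(1)$ time.

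The data structure is then obtained by applying Corollary~\ref{cor:inversion} (equivalently, Theorem~\ref{thm:range} with $d = 1$) to $f$, identifying its domain with $[N]$ for $N = n^2$ and rescaling the tradeoff parameter to $\alpha' \coloneqq \alpha/2$. This yields a structure of $\Tilde{O}(N^{1 - \alpha'/3}) = \Tilde{O}(n^{2 - \alpha/3})$ words, buildable in $\Tilde{O}(n^2)$ time with high probability, supporting inversions in $\Tilde{O}(N^{\alpha'}) = \Tilde{O}(n^\alpha)$ time. To answer a query $q$, I compute $q_y^\bullet$ via the same quantization and ask the structure for any preimage of it under $f$, returning yes iff one is found. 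The degenerate configurations---pairs with $a_1 = a_2$ (line parallel to $\ell$), points of $S_1 \cup S_2$ lying on $\ell$, and $q$ coinciding with an input point---can be filtered during preprocessing and handled by $\Tilde{O}(n)$ auxiliary structures of negligible cost.

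The main, and essentially only, subtlety is the consistency of the quantization: one must ensure that $h^\bullet$ and $q_y^\bullet$ agree on equal underlying real values, so that function inversion captures the exact collinearity predicate rather than an approximate one. This is precisely what Lemma~\ref{lem:precision}, via the Burnikel--Finkler--Mehlhorn--Schirra root-separation bound, is designed to furnish; once it is in hand, the remainder of the argument---the reduction to function inversion and the parameter rescaling---is entirely mechanical.
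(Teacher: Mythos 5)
Your proposal is correct and follows the same route the paper intends: encode each pair $(p_1,p_2)$ by the quantized ordinate of its crossing with $\ell$, invoke Lemma~\ref{lem:precision} (Burnikel et al.) to make this quantization injective on distinct real intersection points and consistent with the quantization of $q_y$, and reduce the query to a single function-inversion call. Your parameter rescaling $\alpha' = \alpha/2$ with $N = n^2$ is correct and reproduces the claimed $\Tilde{O}(n^{2-\alpha/3})$ space and $\Tilde{O}(n^\alpha)$ query time. One small observation worth recording: the paper's prose points to Lemma~\ref{lem:preimage} (preimage queries), which, under the same rescaling, would give only $\Tilde{O}(n^{2-\alpha/4})$ space; your instinct to invoke Corollary~\ref{cor:inversion} (equivalently, Theorem~\ref{thm:range} with $d=1$) directly is the choice that actually matches the stated exponent, since the decision version of collinearity indexing needs only existence of a preimage, not ranking or selection among preimages.
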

Note that as with previous applications, our results also allow us to restrict to pairs of points lying in two boxes given at query time. Using Lemma~\ref{lem:points}, we could also answer queries asking for the closest line spanned by a pair of points in $S_1\times S_2$, above or below the query point $p$, at the cost of increasing the space of our data structure to $\Tilde{O}(n^{2-\alpha/4})$.

\bibliography{implicit_search}

\end{document}